\newtheorem{dfn}{Definition}
\newtheorem{theo}[dfn]{Theorem}
\newtheorem{prop}[dfn]{Proposition}
\newtheorem{lem}[dfn]{Lemma}
\newtheorem*{quest}{Question}
\newcommand{\R}{\mathcal{R}}
\renewcommand{\L}{\mathcal{L}}
\newcommand{\N}{\mathcal{N}}
\renewcommand{\P}{\mathcal{P}}
\newcommand{\UU}{\mathcal{U}}
\newcommand{\MM}{\mathcal{M}}
\newcommand{\OO}{\mathcal{O}}
\newcommand{\QQ}{\mathcal{Q}}
\newcommand{\ZZ}{\mathbb{Z}}
\newcommand{\NN}{\mathbb{N}}
\newcommand{\delete}[1]{}
\newcommand{\GR}{G^{\boldsymbol{R}}}
\newcommand{\GL}{G^{\boldsymbol{L}}}
\newcommand{\HR}{H^{\boldsymbol{R}}}
\newcommand{\HL}{H^{\boldsymbol{L}}}
\newcommand{\Usum}{{\UU_1+\UU_2}}
\newcommand{\cl}{{c\ell}}
\newcommand{\cla}{{\cl(a, \overline{a})}}
\newcommand{\mz}{{\mathcal{M}_{\mathbb{Z}}}}
\newcommand{\qz}{{\mathcal{Q}_{\mathbb{Z}}}}
\title{Invertibility modulo dead-ending no-$\mathcal{P}$-universes}
\author{Gabriel Renault\\
\ \\
University of Mons - UMONS, Place du Parc 20, 7000 Mons, Belgium}
\begin{document}

\maketitle

\begin{abstract}
In normal version of combinatorial game theory, all games are invertible, whereas only the empty game is invertible in misère version.
For this reason, several restricted universes were earlier considered for their study, in which more games are invertible.
We here study combinatorial games in misère version, in particular universes where no player would like to pass their turn
%We study in particular universes where no player would like to pass their turn.
In these universes, we prove that having one extra condition makes all games become invertible.
We then focus our attention on a specific quotient, called $\qz$, and show that all sums of universes whose quotient is $\qz$ also have $\qz$ as their quotient.
\end{abstract}

\begin{sloppypar}

\section{Introduction}

A combinatorial game is a finite two-player game with no chance and perfect information.
The players, called Left and Right\footnote{By convention, Left is a female player whereas Right is a male player.}, alternate moves until one player is unable to move.
The last player to move loses the game under the mis\`ere convention, while that same player would win under normal convention.

The conditions that make a game combinatorial ensure that one of the player has a winning strategy.
The main objective of combinatorial game theory is to determine which player should win and what their strategy is.
A basic way would be to look at all possible moves for both players all the way until the game ends in all branches and backtrack the winning player up to the original position.
Unfortunately, this method is usually quite time-consuming and often space-consuming as well.
Hence other approaches were developped, some specific to particular games and some more general.
One general approach is to decompose the position into a sum of smaller positions, study them separately and conclude on their sum.
It is thus interesting to be able to simplify the smaller positions before looking at the larger picture, including intermediate sums.

Finding invertibility of games is one of the most efficient ways to simplify sums of games.
It enables a simplification to the zero game, that is the game with no move.
Under normal convention, all games are invertible, whereas in mis\`ere version, the only invertible game is the empty game.
Mis\`ere games were thus studied in a more restrictive context\cite{mpg, dicotaxo, sprigs, pkayles, alternating, deadending, quotient1, quotient2, wothese}, where more games are invertible.
In some cases, all games are invertible\cite{sprigs, pkayles, deadending, wothese}.
This especially happens in all contexts studied so far where no player would ever want to pass their turn\cite{deadending,wothese}.
Hence it is natural to wonder if it is always the case.

\subsection{Preliminaries}

A game can be defined recursively by its sets of options $G = \{\GL|\GR\}$, where $\GL$ is the set of games Left can reach in one move (called Left options), and $\GR$ the set of games Right can reach in one move (called Right options).
The typical Left option of $G$ is denoted $G^L$, and the typical Right option of $G$ is denoted $G^R$.
A follower of a game $G$ is a game that can be reached from $G$ after a succession of (not necessarily alternating) Left and Right moves.
Note that a game $G$ is considered one of its own followers.
The zero game $0=\{\cdot|\cdot\}$, is the game with no options (a dot indicates an empty set of options).
A Left end (resp. Right end) is a game where Left (resp. Right) cannot move.
%The birthday $\birth(G)$ of a game $G$ is defined recursively as one plus the maximum birthday of the options of $G$, with $0$ being the only game with birthday $0$.
%For example, the game $1=\{0|\cdot\}$ has birthday $1$.

The (disjunctive) sum $G + H$ of two games $G$ and $H$ is defined recursively as \mbox{$G+H = \{\GL+H,G+\HL|\GR+H,G+\HR\}$}, where $\GL+H$ is understood to range over all sums of $H$ with an element of $\GL$, that is the game where each player can on their turn play a legal move for them in one (but not both) of the components.
The conjugate $\overline{G}$ of a game $G$ is recursively defined as $\overline{G} = \{\overline{\GR}|\overline{\GL}\}$, where again $\overline{\GR}$ is understood to range over all conjugates of elements of $\GR$, that is the game where Left's and Right's roles are reversed.

%définir game tree + image (avec {|} sous arbre)
A game can also be depicted by its game tree, where the game trees of its options are linked to the root by downward edges, left-slanted for Left options and right-slanted for Right options.
It can be more readable than the bracket notation.
For instance, the game trees of a few games are depicted on Figure~\ref{fig:trees} with their bracket notations under the trees.

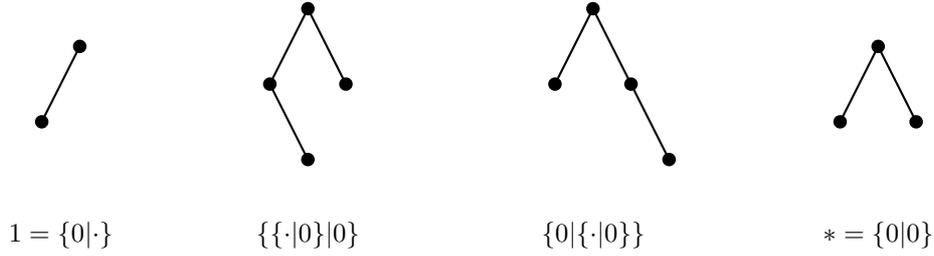
\begin{figure}
\begin{center}

\begin{tikzpicture}
[thick,scale=1,
     vertex/.style={circle,draw=white!100,inner sep=1pt,minimum
size=2mm,fill=white!100},
     blackvertex/.style={circle,draw,inner sep=0pt,minimum
size=1.5mm,fill=black!100},
     clause/.style={circle,draw,inner sep=0pt,minimum
size=3mm,fill=white!100}]

% positions

\coordinate (z) at (-5,1.5);
\coordinate (y) at (-5.5,0.5);

\coordinate (x) at (5.5,1.5);
\coordinate (w) at (5,0.5);
\coordinate (v) at (6,0.5);

\coordinate (a) at (-2,2);
\coordinate (b) at (-2.5,1);
\coordinate (c) at (-2,0);
\coordinate (d) at (-1.5,1);

\coordinate (e) at (1.75,2);
\coordinate (f) at (1.25,1);
\coordinate (g) at (2.25,1);
\coordinate (h) at (2.75,0);

\coordinate (1) at (-2,-1);
\coordinate (2) at (1.75,-1);
\coordinate (3) at (-5.25,-1);
\coordinate (4) at (5.5,-1);

%edges
\draw (c)--(b)--(a)--(d);
\draw (f)--(e)--(g)--(h);
\draw (y)--(z);
\draw (v)--(x)--(w);

%labels

\draw (a) node[blackvertex] {};
\draw (b) node[blackvertex] {};
\draw (c) node[blackvertex] {};
\draw (d) node[blackvertex] {};

\draw (e) node[blackvertex] {};
\draw (f) node[blackvertex] {};
\draw (g) node[blackvertex] {};
\draw (h) node[blackvertex] {};

\draw (z) node[blackvertex] {};
\draw (y) node[blackvertex] {};

\draw (x) node[blackvertex] {};
\draw (w) node[blackvertex] {};
\draw (v) node[blackvertex] {};

\draw (1) node[vertex] {$\{\{\cdot|0\}|0\}$};
\draw (2) node[vertex] {$\{0|\{\cdot|0\}\}$};
\draw (3) node[vertex] {$1 = \{0|\cdot\}$};
\draw (4) node[vertex] {$* = \{0|0\}$};

\end{tikzpicture}
%\vspace{-0.7cm}
\end{center}
\caption{Some game trees}\label{fig:trees}
\end{figure}

Under both conventions, we can sort all games into four sets according to their outcomes.
When Left has a winning strategy on a game $G$ no matter which player starts, we say $G$ has outcome $\L$, and $G$ is an $\L$-position.
Similarly, $\N$, $\P$ and $\R$ (for Next, Previous and Right) denote respectively the outcomes of games on which the first player, the second player and Right has a winning strategy whoever starts the game.
The mis\`ere outcome of a game $G$ is denoted $o^-(G)$.
$\P$-positions are games in which players would rather have their opponent starts, that they would like to pass if it was their turn.
Outcomes are partially ordered according to Figure~\ref{fig:order}, with Left prefering greater games.

\delete{
\begin{figure}
\begin{center}

\begin{tikzpicture}
[thick,scale=1,
     vertex/.style={circle,draw=white!100,inner sep=1pt,minimum
size=2mm,fill=white!100},
     blackvertex/.style={circle,draw,inner sep=0pt,minimum
size=1.5mm,fill=black!100},
     clause/.style={circle,draw,inner sep=0pt,minimum
size=3mm,fill=white!100}]

% positions

\coordinate (L) at (0,2.8);
\coordinate (N) at (-1.4,1.4);
\coordinate (P) at (1.4,1.4);
\coordinate (R) at (0,0);

%edges
\draw (L)--(N)--(R)--(P)--(L);

%labels
\draw (L) node[vertex] {$\L$};
\draw (N) node[vertex] {$\N$};
\draw (R) node[vertex] {$\R$};
\draw (P) node[vertex] {$\P$};

\end{tikzpicture}
\vspace{-0.7cm}
\end{center}
\caption{Partial ordering of outcomes}\label{fig:order}
\end{figure}
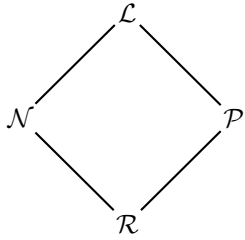
}

Given two games $G$ and $H$, we say that $G$ is greater than or equal to $H$ in mis\`ere play whenever Left always prefers the game $G$ rather than the game $H$, that is $G\geqslant^- H$ if for every game $X$, $o^-(G+X) \geqslant o^-(H+X)$.
We say that $G$ and $H$ are equivalent in misère play, denoted $G\equiv^- H$, when we have both $G \geqslant^- H$ and $H \geqslant^- G$.
%We say $G$ and $H$ are incomparable if we have neither $G \geqslant^- H$ nor $H \geqslant^- G$.

General equivalence and comparison are very limited in general mis\`ere play (see \cite{partizan,canonical}), this is why Plambeck and Siegel defined in \cite{quotient1,quotient2} an equivalence relationship under restricted sets of games, leading to a breakthrough in the study of mis\`ere play games.

\begin{dfn}[\cite{quotient1,quotient2}]
Let $\UU$ be a set of games, $G$ and $H$ two games. 
We say $G$ is greater than or equal to $H$ modulo $\UU$ in mis\`ere play and write 
$G \geqslant^-_\UU H$ 
if $o^-(G+X) \geqslant o^-(H+X)$ for every $X \in \UU$. 
We say $G$ is equivalent to $H$ modulo $\UU$ in mis\`ere play and write $G \equiv^-_\UU H$ if $G \geqslant^-_\UU H$ and $H \geqslant^-_\UU G$.
\end{dfn}

\begin{figure}
\begin{center}

\begin{tikzpicture}
[thick,scale=1,
     vertex/.style={circle,draw=white!100,inner sep=1pt,minimum
size=2mm,fill=white!100},
     blackvertex/.style={circle,draw,inner sep=0pt,minimum
size=1.5mm,fill=black!100},
     clause/.style={circle,draw,inner sep=0pt,minimum
size=3mm,fill=white!100}]

% positions

\coordinate (L) at (0,2.8);
\coordinate (N) at (-1.4,1.4);
\coordinate (P) at (1.4,1.4);
\coordinate (R) at (0,0);

%edges
\draw (L)--(N)--(R)--(P)--(L);

%labels
\draw (L) node[vertex] {$\L$};
\draw (N) node[vertex] {$\N$};
\draw (R) node[vertex] {$\R$};
\draw (P) node[vertex] {$\P$};

\end{tikzpicture}
\vspace{-0.7cm}
\end{center}
\caption{Partial ordering of outcomes}\label{fig:order}
\end{figure}

%définir le quotient
Whenever $\UU$ is closed under sum, $\equiv^-_\UU$ is a congruence relation between elements of $\UU$.
Thus the disjunctive sum modulo $\UU$ defines a monoid $\MM_\UU = \UU/\equiv^-_\UU$.
We also consider the tetrapartition of $\MM_\UU$ according to outcomes: given an outcome $\OO$, we note $\OO_\UU$ the set of equivalence classes of $\UU$ with outcome $\OO$.
The structure $\QQ_\UU = (\MM_\UU, \P_\UU, \L_\UU, \R_\UU)$ is the {\em mis\`ere quotient} of $\UU$, as defined by Plambeck and Siegel in~\cite{quotient1,quotient2}, with the addition of $\L$ and $\R$ outcomes since we consider partizan games.

This approach gave quite some results.
For instance, Plambeck and Siegel~\cite{quotient1,quotient2} considered and solved the sets of all positions of given games, octal games in particular.
Other sets have been considered, including the sets of alternating games $\mathcal{A}$~\cite{alternating}, impartial games $\mathcal{I}$~\cite{ww,onag}, dicot games $\mathcal{D}$~\cite{mpg,dicotaxo,sprigs}, dead-ending games $\mathcal{E}$~\cite{pkayles,deadending}, and all games $\mathcal{G}$~\cite{partizan,canonical}.

We believe that some properties, namely being closed under disjunctive sum, conjugation and taking option, make a set more relevant to be studied.
We hence define a universe to be a set closed under disjunctive sum, conjugation and taking option.

%In the following, we focus on universes with no $\P$-positions.
Another interesting property for a game is to be dead-ending.
We say a Left (resp. Right) end is a dead end if all its followers are Left (resp. Right) ends.
A game is said to be dead-ending if all its end followers are dead ends.

In Section~\ref{sec:de}, we look at universes with no $\P$-positions, establish the invertibility of all elements when they are all dead-ending, and give an example of a universe with almost no invertible element when this last condition is dropped.
In Section~\ref{sec:z}, we focus on a particular quotient, $\qz$, and prove that if several universes share this quotient, then their sum shares this quotient as well.

\section{Invertibility modulo universes without $\P$-position}
\label{sec:de}

This section is dedicated to universes with no $\P$-position.
We first consider dead-ending games.

We recall the following lemma from~\cite{deadending}, that we use to prove invertibility of games.

\begin{lem}[\cite{deadending}]
\label{lem:inverse}
Let $\UU$ be a set of games closed under conjugation and taking option, and $S$ a set of games closed under taking option.
If \mbox{$G + \overline{G} + X \in \L^- \cup \N^-$} for every game $G \in S$ and every Left end $X \in \UU$, then $G + \overline{G} \equiv^-_\UU 0$ for every $G \in S$.
\end{lem}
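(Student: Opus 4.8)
The plan is to reduce the two-sided statement $G+\overline{G}\equiv^-_\UU 0$ to a single outcome inequality, and then to establish that inequality by a symmetry (mirroring) strategy. First I would note that it suffices to prove
$$o^-(G+\overline{G}+X)\geqslant o^-(X)\qquad\text{for all }G\in S,\ X\in\UU.$$
Indeed, $G+\overline{G}$ is self-conjugate, since $\overline{G+\overline{G}}=\overline{G}+G$; so applying the inequality to $\overline{X}$ (which lies in $\UU$ because $\UU$ is closed under conjugation) and then conjugating, which is an order-reversing involution on outcomes swapping $\L^-$ with $\R^-$ and fixing $\N^-,\P^-$, gives $o^-(G+\overline{G}+X)\leqslant o^-(X)$ as well. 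The two inequalities together say $o^-(G+\overline{G}+X)=o^-(X)$ for every $X\in\UU$, which is exactly $G+\overline{G}\equiv^-_\UU 0$. This reduction uses no hypothesis on $S$ beyond the one given.

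Unravelling the outcome poset, the inequality above is equivalent to two implications: if Left wins $X$ moving first then she wins $G+\overline{G}+X$ moving first, and if Left wins $X$ moving second then she wins $G+\overline{G}+X$ moving second. The first follows from the second: Left plays her winning first move inside the $X$-component, reaching $G+\overline{G}+X^L$ with $X^L\in\UU$ a second-player Left win, and applies the second implication. So the crux is the second implication, which I would state as: for all $G\in S$ and $X\in\UU$, if $o^-(X)\in\L^-\cup\P^-$ then $o^-(G+\overline{G}+X)\in\L^-\cup\P^-$, and prove it by induction on the total size of $G+\overline{G}+X$.

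For the inductive step I assume Left wins $X$ moving second, so Right moves first in $G+\overline{G}+X$, and I answer each Right move. If Right plays inside $G$ (resp.\ $\overline{G}$), Left replies with the conjugate move in $\overline{G}$ (resp.\ $G$), restoring a symmetric position $H+\overline{H}+X$ with $H\in S$ a smaller game (using that $S$ is closed under taking options), to which the induction hypothesis applies. If Right plays in $X$ to a position still offering Left a reply, Left follows her second-player strategy on $X$ to some $X^{RL}$ with $o^-(X^{RL})\in\L^-\cup\P^-$ and invokes the induction hypothesis on the smaller $G+\overline{G}+X^{RL}$. The remaining case is when Right plays in $X$ to a \emph{Left end} $X^R$: here it is Left's turn and the Left-end hypothesis of the lemma applies directly, giving $G+\overline{G}+X^R\in\L^-\cup\N^-$, a first-player win for Left. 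This last case is exactly where the extra dead-ending-type condition of the statement is consumed.

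The hard part is the endgame bookkeeping: I must guarantee that Left is never the player to move without a reply and, dually, that the game never ends with Right unable to move, which under the misère convention would make Right the winner. Replies in the $G+\overline{G}$ part are always available by symmetry, and the Left-end hypothesis covers the moment the $X$-component dies on Left; the delicate point is excluding the position from collapsing to a Right end while Right is to move. I would resolve this with the observation that a game with outcome in $\L^-\cup\P^-$ cannot be a Right end, since a Right end with Right to move is a win for the mover and hence a second-player loss; thus $X$ is never a Right end under our assumption, and $G+\overline{G}$ is a Right end only when it equals $0$. Consequently Right always has a legal move, every case above is covered, and the induction closes.
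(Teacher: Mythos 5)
The paper does not actually prove this lemma: it is recalled verbatim from \cite{deadending} and used as a black box, so there is no in-paper proof to compare against. Your reconstruction is the standard mirroring argument, and its architecture is sound. The reduction of $G+\overline{G}\equiv^-_\UU 0$ to the single inequality $o^-(G+\overline{G}+X)\geqslant o^-(X)$, using that $G+\overline{G}$ is self-conjugate and that $\UU$ is closed under conjugation, is valid; the induction for the second-player implication correctly consumes each hypothesis exactly where it is needed (closure of $S$ under options for the mirrored position $G^R+\overline{G^R}+X$, closure of $\UU$ under options for $X^R$ and $X^{RL}$, and the Left-end hypothesis when Right kills the $X$-component); and your endgame bookkeeping correctly rules out the only mis\`ere pathology, namely the position becoming a Right end on Right's turn.

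There is one missing case, and it is a genuine (if small) logical gap rather than a stylistic omission. In your reduction of the first-player implication to the second-player one, you have Left ``play her winning first move inside the $X$-component.'' But every Left end $X$ satisfies ``Left wins $X$ moving first'' (she wins immediately under the mis\`ere convention precisely because she cannot move), and for such $X$ there is no move in $X$ for Left to play, so the claim ``the first follows from the second'' is false as written. This case cannot be routed through the second-player implication at all; it must be discharged directly by the hypothesis of the lemma, which says exactly that $G+\overline{G}+X\in\L^-\cup\N^-$, i.e.\ that Left wins $G+\overline{G}+X$ moving first, for every Left end $X\in\UU$. The omission matters: as written, you only ever invoke the Left-end hypothesis for proper followers $X^R$ of $X$, never for $X$ itself, so for instance the instance $X=0$ --- where one must show $o^-(G+\overline{G})\geqslant\N$, and where the hypothesis is doing all the work --- is left unaddressed. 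The fix is one sentence, but it needs to be there.
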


%We can now prove an interesting property of dead-ending universes containing no mis\`ere $\P$-position, namely that all games are invertible in such a context.
We can now prove that all games are invertible in dead-ending universes containing no mis\`ere $\P$-position.

\begin{theo}
Let $\UU$ be a set of games closed under conjugation, sum, and taking option, such that every game in $\UU$ is dead-ending and no game in $\UU$ has mis\`ere outcome $\P$.
For any game $G$ in $\UU$, we have $G + \overline{G} \equiv^-_\UU 0$.
\end{theo}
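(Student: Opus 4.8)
The plan is to invoke Lemma~\ref{lem:inverse} with $S=\UU$: since $\UU$ is closed under taking option it is an admissible choice of $S$, and it is closed under conjugation and taking option as the lemma requires, so the conclusion $G+\overline{G}\equiv^-_\UU 0$ for all $G\in\UU$ is exactly our theorem. This reduces everything to verifying the hypothesis of the lemma, namely that $G+\overline{G}+X\in\L^-\cup\N^-$ for every $G\in\UU$ and every Left end $X\in\UU$. Two preliminary observations drive the argument. First, $G+\overline{G}+X$ lies in $\UU$ (closure under sum and conjugation), so by hypothesis its outcome is not $\P^-$; hence showing $G+\overline{G}+X\in\L^-\cup\N^-$ is the same as showing that Left wins moving first, equivalently that the outcome is not $\R^-$. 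Second, since $X$ is an end-follower of the dead-ending game $X$, it is a dead end, so every follower of the $X$-component is again a Left end: Left can never move there, and that component behaves as a reservoir of Right-only moves.

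I would then prove, by strong induction on $\birth(G)+\birth(X)$, the slightly strengthened statement: (i) $o^-(G+\overline{G}+X)\in\L^-\cup\N^-$, and (ii) if moreover $X$ has a Right option, then $o^-(G+\overline{G}+X)=\L^-$. Carrying (ii) is essential, because a bare first-player claim is too weak to feed back into the induction, whereas the stronger conclusion ``$\L^-$'' is precisely what a mirroring response needs.

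For (ii) I would show that Left wins \emph{moving second}. Whenever Right plays in the $G$- or $\overline{G}$-component, Left answers with the conjugate move in the other component, reaching a smaller symmetric instance $A+\overline{A}+X$ with the same reservoir $X$; by induction hypothesis (ii) this position has outcome $\L^-$, so Left wins from it no matter who moves next. Whenever Right plays in the reservoir, reaching $G+\overline{G}+X^R$ with Left to move, induction hypothesis (i) gives an outcome in $\L^-\cup\N^-$, so Left (to move) wins. Since $X$ has a Right option it is not $0$, so Right always has a move; hence Left wins moving second, and as the sum is not $\P^-$ its outcome is $\L^-$. For (i): if $X$ has a Right option we are done by (ii); otherwise $X$ is both a Left and a Right end, which forces $X=0$, and then $G+\overline{G}$ is self-conjugate, so its outcome is invariant under the $\L^-\leftrightarrow\R^-$ swap and thus lies in $\{\P^-,\N^-\}$. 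Being not $\P^-$, it equals $\N^-\subseteq\L^-\cup\N^-$.

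The main obstacle is the interaction between the mirroring strategy and the dead-ending reservoir: plain Tweedledum--Tweedledee mirroring breaks down the moment Right spends a move in the reservoir, since Left has no move there to copy, and it also cannot be launched by the first player directly. The resolution is to dispatch reservoir moves by induction on the strictly smaller reservoir through~(i), and to make the symmetric responses close up by strengthening the claim to~(ii); the no-$\P^-$ hypothesis is used twice, once to upgrade ``not $\R^-$'' to ``Left wins moving first'', and once, via self-conjugacy, to settle the reservoir-free case $X=0$.
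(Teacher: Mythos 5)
Your proof is correct and takes essentially the same route as the paper: reduce to Lemma~\ref{lem:inverse}, then induct with the strengthened claim that the outcome is $\N$ when $X=0$ (via self-conjugacy and the no-$\P$ hypothesis) and $\L$ otherwise (mirroring Right's moves in $G+\overline{G}$, and dispatching his moves in the dead-end reservoir $X$ by induction). Your split into (i) and (ii) is just an equivalent packaging of the paper's case distinction $X=0$ versus $X\neq 0$, since a Left end with no Right option is $0$.
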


\begin{proof}
By Lemma~\ref{lem:inverse}, we just need to prove that Left wins $G + \overline{G} + X$ playing first for every Left end $X \in \UU$ and every $G \in \UU$.
We actually prove that if $X = 0$, then $o^-(G + \overline{G} + X) = \N$, and otherwise $o^-(G + \overline{G} + X) = \L$ by induction on $G$ and $X$.
If $X=0$, then as $G + \overline{G} + X = \overline{G + \overline{G} + X}$, the outcome of the game is $\N$ or $\P$, but as no game in $\UU$ has outcome $\P$, its outcome is $\N$.

Assume now $X \neq 0$ and consider Right starts the game.
If he plays in $X$, he ends up in a position with outcome $\N$ or $\L$ by induction, where Left wins playing first.
If he plays in $G + \overline{G}$, Left can answer with the symmetric move and leave her opponent a position $G' + \overline{G'} + X$, with $G'$ an option of $G$, which has outcome $\L$ by induction.
Hence Left wins $G + \overline{G} + X$ playing second.
As no game in $\UU$ has outcome $\P$, we have $o^-(G + \overline{G} + X) = \L$.

We thus have the hypothesis of Lemma~\ref{lem:inverse} and can conclude that every game of $\UU$ is invertible with its conjugate as inverse.
\end{proof}

Unfortunately, that property is not true for all universes, as we now give a counterexample in the general case.
We define a game $a = \{\cdot|2\}$, and we look at the closure $\cla$ of $a$ and its conjugate under sum and followers.
Since $1+1=2$, an element of $\cla$ can be written under the form $k_1 a + k_2 \overline{a} + k_3 1 + k_4 \overline{1}$ (see Figure~\ref{fig:a1}).
Note that neither $a$ nor $\overline a$ is dead-ending.

\begin{figure}
\begin{center}

\begin{tikzpicture}
[thick,scale=1,
     vertex/.style={circle,draw=white!100,inner sep=1pt,minimum
size=2mm,fill=white!100},
     blackvertex/.style={circle,draw,inner sep=0pt,minimum
size=1.5mm,fill=black!100},
     clause/.style={circle,draw,inner sep=0pt,minimum
size=3mm,fill=white!100}]

% positions

\coordinate (1) at (-1.75,1);
\coordinate (2) at (-2,0.5);

\coordinate (a1) at (0,1.5);
\coordinate (a2) at (0.5,1);
\coordinate (a3) at (0.25,0.5);
\coordinate (a4) at (0,0);

\coordinate (b1) at (3,1.5);
\coordinate (b2) at (2.5,1);
\coordinate (b3) at (2.75,0.5);
\coordinate (b4) at (3,0);

\coordinate (3) at (4.75,1);
\coordinate (4) at (5,0.5);

\coordinate (n) at (-1.875,-0.5);
\coordinate (a) at (0.25,-0.5);
\coordinate (b) at (2.75,-0.5);
\coordinate (m) at (4.875,-0.5);

%edges
\draw (a1)--(a2)--(a3)--(a4);
\draw (b1)--(b2)--(b3)--(b4);
\draw (1)--(2);
\draw (3)--(4);

%labels

\draw (1) node[blackvertex] {};
\draw (2) node[blackvertex] {};

\draw (a1) node[blackvertex] {};
\draw (a2) node[blackvertex] {};
\draw (a3) node[blackvertex] {};
\draw (a4) node[blackvertex] {};

\draw (b1) node[blackvertex] {};
\draw (b2) node[blackvertex] {};
\draw (b3) node[blackvertex] {};
\draw (b4) node[blackvertex] {};

\draw (3) node[blackvertex] {};
\draw (4) node[blackvertex] {};

\draw (n) node[vertex] {$1$};
\draw (a) node[vertex] {$a$};
\draw (b) node[vertex] {$\overline{a}$};
\draw (m) node[vertex] {$\overline{1}$};

\end{tikzpicture}
\vspace{-0.7cm}
\end{center}
\caption{The four generators of $\cla$}\label{fig:a1}
\end{figure}

We first fully determine the outcomes of games in $\cla$.

\begin{theo}
\label{th:aout}
Let $G$ be a game in $\cla$ and write $G = k_1 a + k_2 \overline{a} + k_3 1 + k_4 \overline{1}$.
We have
$$
o^-(G)=
\begin{cases}
\N & \text{if } k_1+k_3 = k_2+k_4 \text{ or } (k_2=k_3=0 \text{ and } k_1 \geqslant k_4) \text{ or } (k_1=k_4=0 \text{ and } k_2 \geqslant k_3) \\
\L & \text{if } k_1+k_4 > 0 \text{ and } k_2+k_4 > k_1+k_3 \\
\R & \text{if } k_2+k_3 > 0 \text{ and } k_1+k_3 > k_2+k_4
\end{cases}
$$
\end{theo}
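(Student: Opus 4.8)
The plan is to reformulate every position of $\cla$ as a nonnegative integer vector $(k_1,k_2,k_3,k_4)$ and to track a single potential. Reading off the four generators, the only available moves are: Left may send a $1$ to $0$ (so $k_3 \mapsto k_3-1$) or an $\overline a$ to $\overline 1 + \overline 1$ (so $k_2 \mapsto k_2-1,\ k_4 \mapsto k_4+2$), while Right may send an $\overline 1$ to $0$ ($k_4 \mapsto k_4-1$) or an $a$ to $1+1$ ($k_1 \mapsto k_1-1,\ k_3\mapsto k_3+2$). Hence Left has no move exactly when $k_2=k_3=0$ and Right has no move exactly when $k_1=k_4=0$, and if I set $\Phi = (k_1+k_3)-(k_2+k_4)$ then every Left move decreases $\Phi$ by exactly $1$ and every Right move increases it by exactly $1$. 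The first thing I would do is rewrite the target in these terms: the three displayed cases say precisely that $o^-(G)=\L$ exactly when $\Phi<0$ and Right still has a move, $o^-(G)=\R$ exactly when $\Phi>0$ and Left still has a move, and $o^-(G)=\N$ in every remaining case (namely $\Phi=0$, or $\Phi<0$ with Right stuck, or $\Phi>0$ with Left stuck).

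Next I would split the outcome into the two Boolean predicates $L_1(G)$ (``Left wins moving first'') and $L_2(G)$ (``Left wins moving second''), since $\N,\L,\R,\P$ are exactly the four combinations of their truth values. The heart of the argument is then a pair of claims, proved simultaneously by induction on $\mu=3k_1+3k_2+k_3+k_4$, which strictly decreases at every move: $L_1(G)$ holds iff $\Phi\leqslant 0$ or Left has no move, and $L_2(G)$ holds iff $\Phi<0$ and Right has a move. Combining these recovers the formula — in particular $L_1\wedge L_2$ forces $\Phi<0$ with a Right move, i.e.\ $\L$ — and simultaneously shows $\cla$ has no $\P$-position, because $\neg L_1\wedge L_2$ would demand $\Phi>0$ and $\Phi<0$ at once.

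For the induction, $L_1$ at $G$ is computed from the $L_2$-values of Left's options, and $L_2$ from the $L_1$-values of Right's options, all of which have smaller $\mu$; since a Left option shifts $\Phi$ to $\Phi-1$ and a Right option to $\Phi+1$, the sign comparisons drop out immediately. The step I expect to be the real obstacle is the boundary bookkeeping. To witness $L_1$ when $\Phi\leqslant 0$, Left must choose a first move that not only keeps $\Phi$ negative but also leaves Right with a reply — otherwise the option would be a Right-stuck $\N$-position instead of the $\L$-position she needs — and dually for Right. This is exactly where the conversions $a\to 1+1$ and $\overline a\to \overline 1+\overline 1$ matter, since they manufacture an opponent move on demand, and where the two extra $\N$-clauses of the statement originate; verifying that such a move always exists (splitting on whether $k_2$, resp.\ $k_1$, is positive and using $\Phi\leqslant 0$, resp.\ $\Phi\geqslant 0$, to dispose of the remaining subcase) is the one place that requires genuine care. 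Finally, I would obtain the $\R$-region and its boundary $\N$-clause for free from the conjugation symmetry $\Phi(\overline G)=-\Phi(G)$ combined with $o^-(\overline G)=\overline{o^-(G)}$.
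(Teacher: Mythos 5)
Your proposal is correct and follows essentially the same route as the paper: an induction on the position that tracks the single quantity $(k_1+k_3)-(k_2+k_4)$, observes that every Left (resp.\ Right) move shifts it by exactly $-1$ (resp.\ $+1$), and treats the Left-end/Right-end boundary cases separately, the key point being that a player with the potential in their favour can always choose an option that leaves the opponent a reply (via $a\to 1+1$ or $\overline a\to\overline 1+\overline 1$). Your explicit $L_1/L_2$ bookkeeping with the decreasing measure $\mu$ is just a more systematic packaging of the paper's argument --- if anything it is more complete, since it also covers the end cases with strict inequality (e.g.\ $k_2=k_3=0$ and $k_1<k_4$) that the paper's write-up passes over in silence.
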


\begin{proof}
We prove the result by induction on $G$.
If $k_2=k_3=0$, then $G$ is a Left end and $o^-(G) \geqslant \N$.
Similarly, if $k_1=k_4=0$, then $G$ is a Right end and $o^-(G) \leqslant \N$.

Assume first $k_2=k_3=0$ and $k_1 \geqslant k_4$.
If Right moves first, either there is no move and he wins immediately, or he can play in one of the $k_1$ $a$, moving from $G$ to $(k_1-1)a + 2 \cdot 1 + k_4 \overline{1}$ which has outcome $\R$ by induction since $k_1-1+2 > k_1 \geqslant k_4$.
Similarly, if $k_1=k_4=0$ and $k_2 \geqslant k_3$, Left wins playing first.

Assume now both $k_1+k_4$ and $k_2+k_3$ are positive.
Playing first, Left can either play on an $\overline{a}$ or a $1$, both increasing the difference $(k_2+k_4)-(k_1+k_3)$ by $1$ while not changing the fact that $k_1+k_4$ is positive.
According to the induction, if that difference was non-negative, she moved to an $\L$-position, and otherwise, she moved either to an $\N$-position or an $\R$-position, both of which she loses playing second.
The symmetric result when Right plays first concludes the proof.
\end{proof}

Note that no game has outcome $\P$.
Using this characterisation, we can now prove that there are games in $\cla$ that are not invertible.
Worse, actually, only one of them is.

\begin{prop}
Let $G$ be a game in $\cla$.
If $G \equiv^-_{\cla} 0$, then $G = 0$.
\end{prop}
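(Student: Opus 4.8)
The plan is to prove the contrapositive: I will show that every $G \in \cla$ with $G \neq 0$ fails to be equivalent to $0$, by exhibiting an explicit witness $X \in \cla$ with $o^-(G + X) \neq o^-(X)$. Throughout I write $G = k_1 a + k_2 \overline{a} + k_3 1 + k_4 \overline{1}$, and I will read off all outcomes from Theorem~\ref{th:aout}, using that adding a generator to $G$ simply increments the corresponding coefficient.

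The guiding observation is that, apart from the two degenerate Left-end/Right-end situations, the outcome of a game in $\cla$ is controlled by the sign of the quantity $d = (k_1 + k_3) - (k_2 + k_4)$: it is $\R$ when $d > 0$ and $\L$ when $d < 0$. Hence, to force a prescribed outcome on $G + X$, I will take $X$ to be a large multiple of a single generator so as to drive $d$ strongly positive or strongly negative. Concretely, for $N$ large the game $N a$ has outcome $\N$ (it is a Left end with $k_1 \geqslant k_4$), while adding it to $G$ makes the difference $d$ positive; provided $k_2 + k_3 > 0$, Theorem~\ref{th:aout} then gives $o^-(G + N a) = \R \neq \N$. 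Symmetrically, $N \overline{a}$ has outcome $\N$, and provided $k_1 + k_4 > 0$ we get $o^-(G + N \overline{a}) = \L \neq \N$.

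These two tests suffice. Since $G \neq 0$, at least one of $k_1, k_2, k_3, k_4$ is positive. If $k_1 > 0$ or $k_4 > 0$, then $k_1 + k_4 > 0$ and the witness $X = N \overline{a}$ (with $N$ large) separates $G$ from $0$; if instead $k_2 > 0$ or $k_3 > 0$, then $k_2 + k_3 > 0$ and the witness $X = N a$ does the job. As $\{1,4\} \cup \{2,3\}$ exhausts all four indices, one of the two cases always applies, so $G \not\equiv^-_{\cla} 0$, which is what we want.

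The only point requiring care --- the would-be main obstacle --- is checking that the degeneracy clauses in Theorem~\ref{th:aout} do not spoil the argument: I must confirm that the shifted game genuinely lands in the $\R$ (resp.\ $\L$) regime rather than in a degenerate $\N$ case. This is exactly why the split is made along $k_1 + k_4 > 0$ versus $k_2 + k_3 > 0$, since these are precisely the non-degeneracy hypotheses appearing in the $\L$ and $\R$ lines of the outcome formula, so they hold automatically in the case where each witness is used; the lower bound on $N$ needed for $d$ to acquire the correct sign is then a one-line inequality (any $N > k_1 + k_2 + k_3 + k_4$ works).
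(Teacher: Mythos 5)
Your proposal is correct and follows essentially the same route as the paper: the paper also takes the witness $X = (k_1+k_2+k_3+k_4+1)a$ when Left has a move in $G$ (i.e.\ $k_2+k_3>0$) and reads off $o^-(G+X)=\R\neq\N=o^-(X)$ from Theorem~\ref{th:aout}, handling the other case by symmetry where you spell it out with $N\overline{a}$. The only cosmetic difference is that the paper invokes ``without loss of generality'' via conjugation instead of writing both cases explicitly.
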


\begin{proof}
We write $G = k_1 a + k_2 \overline{a} + k_3 1 + k_4 \overline{1}$.
Assume $k_1+k_2+k_3+k_4$ is positive.
Then at least one player has a move.
Without loss of generality, we can assume Left has a move.
Now consider the game $X=(k_1+k_2+k_3+k_4+1)a$.
By Theorem~\ref{th:aout}, we have $o^-(0+X) = \N$, while $o^-(G+X) = \R$.
Hence $G \not\equiv^-_{\cla} 0$.
\end{proof}

Actually, the situation is even worse.
These games are not even cancellative.

\begin{prop}
The only cancellative game in $\cla$ is $0$.
\end{prop}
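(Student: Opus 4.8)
The plan is to read off from Theorem~\ref{th:aout} that the misère outcome of any element $H = m_1 a + m_2 \overline{a} + m_3 1 + m_4 \overline{1}$ of $\cla$ depends only on three data: the difference $d_H = (m_2+m_4)-(m_1+m_3)$, whether Right has a move in $H$ (i.e. whether $m_1+m_4>0$), and whether Left has a move in $H$ (i.e. whether $m_2+m_3>0$). Indeed, the theorem says $o^-(H)=\L$ exactly when $d_H>0$ and $m_1+m_4>0$, that $o^-(H)=\R$ exactly when $d_H<0$ and $m_2+m_3>0$, and that $o^-(H)=\N$ in every remaining case. Since $d$ is additive under disjunctive sum and the two has-a-move conditions combine disjunctively (a sum offers a player a move iff one of its summands does), adding a fixed $X \in \cla$ preserves equality of all three data. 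Hence if $G+A$ and $G+B$ agree on $d$, on the Right-move condition, and on the Left-move condition, then $o^-(G+A+X)=o^-(G+B+X)$ for every $X$, so $G+A \equiv^-_{\cla} G+B$.

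The \emph{key idea} behind non-cancellativity is that the two has-a-move conditions are absorbing: once $G$ itself provides a move for some player, the presence or absence of an extra move for that \emph{same} player in a second summand becomes undetectable. So the strategy is, given $G \neq 0$, to select a player for whom $G$ already has a move, and then exhibit two games $A, B$ that differ only in whether they offer that player an additional move (while carrying the same $d$ and a move for the other player). Such $A, B$ will satisfy $A \not\equiv^-_{\cla} B$ yet $G+A \equiv^-_{\cla} G+B$.

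Concretely, I would split into two cases using the fact (as in the proof of the preceding proposition) that $G \neq 0$ forces $k_1+k_4>0$ or $k_2+k_3>0$. If $k_1+k_4>0$, set $A=\overline{a}+1$ and $B=1+\overline{1}$; both have $d=0$ and a Left move, but only $B$ has a Right move. I would verify $A \not\equiv^-_{\cla} B$ by testing against $X=\overline{a}$, which pushes $d$ above $0$: then $o^-(A+\overline{a})=\N$ (no Right move available) whereas $o^-(B+\overline{a})=\L$. On the other hand $G+A$ and $G+B$ share $d$ (since $d_A=d_B=0$), both have a Right move (supplied by $G$) and both have a Left move, so by the first paragraph $G+A \equiv^-_{\cla} G+B$; thus $G$ is not cancellative. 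The case $k_2+k_3>0$ is symmetric, using $A=a+\overline{1}$, $B=1+\overline{1}$, and the test game $X=a$. Since $0$ is trivially cancellative and, by the preceding proposition, is the only game equivalent to $0$, this identifies $0$ as the unique cancellative game in $\cla$.

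The step I expect to be the main obstacle is the bookkeeping in the first paragraph: restating Theorem~\ref{th:aout} so that the outcome becomes a genuine function of the triple $(d_H, \text{Right-move?}, \text{Left-move?})$, and confirming that these three data behave additively and disjunctively under sums with an arbitrary $X \in \cla$. Once that reduction is secured, the two explicit witnesses make the failure of cancellation almost immediate; the only remaining care is to ensure that the chosen $A$ and $B$ agree in every coordinate that $G$ does not already saturate, so that exactly one detectable difference survives in the pair $A,B$ but none survives in the pair $G+A, G+B$.
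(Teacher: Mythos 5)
Your proposal is correct and follows essentially the same route as the paper: both arguments rest on reading off from Theorem~\ref{th:aout} that the outcome is a function of the difference $(k_2+k_4)-(k_1+k_3)$ together with the two ``has a move'' indicators, and that these indicators saturate once $G$ supplies the relevant move, so that a pair differing only in that coordinate (the paper uses $1$ versus $1+1+\overline{1}$, you use $\overline{a}+1$ versus $1+\overline{1}$) is distinguishable on its own but not after adding $G$. The only differences are your choice of witnesses and distinguishing positions, and your more explicit restatement of the outcome function, all of which check out.
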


\begin{proof}
First note that $1+1+\overline{1}$ and $1$ are not equivalent modulo $\cla$ since $o^-(1+1+\overline{1}+\overline{a}+\overline{a}) = \L$ while $o^-(1+\overline{a}+\overline{a}) = \N$.
Similarly, $1+\overline{1}+\overline{1}$ and $\overline{1}$ are not equivalent modulo $\cla$.

Now let $G$ be any non-zero game in $\cla$.
$G$ has either a Left or a Right option (or both).
Without loss of generality, we may assume it has a Right option.
We claim that $G + 1$ and $G + 1+1+\overline{1}$ are equivalent modulo $\cla$.
Indeed, consider any game $X$ in $\cla$ and write $G+X+1 = k_1 a + k_2 \overline{a} + k_3 1 + k_4 \overline{1}$.
As $G$ is not a Right end, we have $k_1+k_4 > 0$.
We ensured $k_2+k_3 > 0$ by having $1$ in the sum.
Hence the outcome of $G+X+1$ is fully determined by $(k_1+k_3) - (k_2+k_4)$.
Similarly, the outcome of $G+X+1+1+\overline{1}$ is fully determined by $(k_1+(k_3+1) - (k_2+(k_4+1))$.
Since the two numbers are equal, the two games have the same outcome.
Hence we have $G + 1 \equiv^-_{\cla} G + 1+1+\overline{1}$ but $1 \not\equiv^-_{\cla} 1+1+\overline{1}$, thus completing the proof that $G$ is not cancellative in $\cla$.
\end{proof}

Nevertheless, there exist universes with non-dead-ending positions but without $\P$-positions where all games are invertible.
It could thus be interesting to characterise which ones among them share this property.

\section{The quotient $\qz$}
\label{sec:z}

The simplest non-trivial universe with no $\P$-position is $\mz = \{k_1 1 + k_2 \overline{1}|k_1,k_2 \in \NN\}$.
As \mbox{$1+\overline{1} \equiv^-_\mz 0$}~(\cite{deadending}), each equivalence class has a representative of which at most one of $k_1$ and $k_2$ is positive.
We note $\qz = (\langle 1,\overline 1| 1 + \overline 1 = 0\rangle,\emptyset,(\NN^*) \overline 1, (\NN^*) 1)$ the quotient of $\mz$.
Actually, several other universes~\cite{deadending, wothese}, seemingly more complex, share this same quotient.

\begin{prop}
\label{prop:Z}
A universe $\UU$ has quotient $\qz$ if there exists a surjective function $f: \UU \rightarrow \ZZ$ such that:
\begin{enumerate}[(i)]
\item $\forall G,H \in \UU, f(G+H) = f(G) + f(H)$,
\item $o^-(G) = \left\{ \begin{array}{ll}
\N & \text{ if } f(G) = 0, \\
\L & \text{ if } f(G) < 0, \\
\R & \text{ if } f(G) > 0.
\end{array}\right.$
\end{enumerate}
\end{prop}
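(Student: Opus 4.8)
The plan is to show directly that the assignment $[G] \mapsto f(G)$ realises $\QQ_\UU$ as $\qz$. The heart of the argument is the equivalence
\[
G \equiv^-_\UU H \iff f(G) = f(H),
\]
after which both the monoid structure and the outcome tetrapartition fall out immediately.

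First I would prove the backward implication, which is the easy one: if $f(G) = f(H)$, then for every $X \in \UU$ condition~(i) gives $f(G+X) = f(G) + f(X) = f(H) + f(X) = f(H+X)$, and condition~(ii) then forces $o^-(G+X) = o^-(H+X)$, so $G \equiv^-_\UU H$. For the forward implication I argue by contraposition using surjectivity of $f$. Suppose $f(G) \neq f(H)$, say $f(G) < f(H)$. By surjectivity there is some $X \in \UU$ with $f(X) = -f(G)$. Then $f(G+X) = 0$ while $f(H+X) = f(H) - f(G) > 0$, so by~(ii) we get $o^-(G+X) = \N$ and $o^-(H+X) = \R$. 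Since the outcome ordering is a genuine partial order, equivalence modulo $\UU$ forces equal outcomes in every component; as these two outcomes differ, $G$ and $H$ are not equivalent modulo $\UU$.

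With this equivalence in hand, the map $\phi \colon \MM_\UU \to \ZZ$ sending the class of $G$ to $f(G)$ is well defined and injective; it is a monoid homomorphism by~(i) and surjective by hypothesis, hence a monoid isomorphism. Identifying $\ZZ$ with the presented monoid $\langle 1, \overline 1 \mid 1 + \overline 1 = 0\rangle$ via $1 \mapsto 1$ and $-1 \mapsto \overline 1$ (this presented monoid is exactly $\ZZ$, since the single relation lets one cancel matching pairs of generators down to a normal form involving a single generator), we obtain $\MM_\UU \cong \langle 1, \overline 1 \mid 1 + \overline 1 = 0\rangle$. Finally, reading off~(ii) through $\phi$ matches the outcome classes: no class has outcome $\P$, so $\P_\UU = \emptyset$; the class of $0$ is the unique $\N$-position; the classes with $f < 0$ are the $\L$-positions, corresponding to $(\NN^*)\overline 1$; and the classes with $f > 0$ are the $\R$-positions, corresponding to $(\NN^*) 1$. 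Thus $\QQ_\UU = \qz$.

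The proof is short and I do not expect a genuine obstacle. The only points requiring care are invoking surjectivity in the right place to separate inequivalent games, and checking that the presented monoid really is $\ZZ$ rather than a proper quotient of it — the latter being guaranteed precisely because $f$ maps onto all of $\ZZ$.
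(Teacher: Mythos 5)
Your proof is correct. Note that the paper states Proposition~\ref{prop:Z} without giving any proof at all (it is treated as routine and immediately followed by the definition of a witness function), so there is no in-paper argument to compare against; the route you take is the natural one and it works. The core equivalence $G \equiv^-_\UU H \iff f(G) = f(H)$ is exactly the right reduction: the backward direction needs only additivity together with closure of $\UU$ under sum (so that (ii) applies to $G+X$), and the forward direction correctly uses surjectivity to shift one of the two games to the fibre of $0$ so that the two sums land in different outcome classes. Two small points you leave implicit but which are immediate and worth a line: first, $0 \in \UU$ (a nonempty set closed under taking options contains a game of minimal birthday, which must have no options) and $f(0)=0$ by additivity, which is what makes the identity class the $\N$-class; second, your aside that the presented monoid $\langle 1,\overline 1 \mid 1+\overline 1 = 0\rangle$ really is $\ZZ$ is correct and does need saying, since the quotient is defined via that presentation. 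With those remarks folded in, your argument is a complete proof of the proposition.
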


In this case, we say that $f$ is a witness function of $\UU$.
We will see after Lemma~\ref{lem:Z=>n+1} that the witness function is actually unique.

Note that universes with positions that are not dead-ending can still have quotient $\qz$.
Still, all the positions are invertible.

We also define the sum of two sets of games as follows:

\begin{dfn}
Let $S_1$ and $S_2$ be two sets.
We define $S_1+S_2$ the sum of these sets as follows:
$$S_1+S_2 = \{s_1+s_2|s_1 \in S_1 \wedge s_2 \in S_2\}.$$
\end{dfn}

In~\cite{wothese}, the author considered sums of universes having quotient $\qz$ and these sums were sharing the same quotient $\qz$.
We here prove that this is always the case.

First, we give another characterisation for a universe to have quotient $\qz$.
The next lemma shows one way of the equivalence.

%In the following, we often abuse notation and write $n$ for a game $G$ such that $f(G) = n$ for some function $f$.

\begin{lem}
\label{lem:=>Z}
Let $\UU$ be a universe and $f: \UU \rightarrow \ZZ$ a surjective function such that:
\begin{enumerate}[a)]
\item $\forall G \in \UU, n>0, (f(G) = n) \Rightarrow ((\exists G^L \in \GL, f(G^L) = n-1) \wedge (\forall G^L \in \GL, f(G^L) \geqslant n-1))$,
%\item $\forall G \in \UU, n \geqslant 0, (f(G) = n) \Rightarrow (((\GR \neq \emptyset) \Rightarrow (\exists G^R \in \GR, 1 \leqslant f(G^R) \leqslant n+1)) \wedge (\forall G^R \in \GR, f(G^R) \leqslant n+1))$,
\item $\forall G \in \UU, n \geqslant 0, (f(G) = n) \Rightarrow
\left\{ \begin{array}{r@{\,}l}
( & ((\GR \neq \emptyset) \Rightarrow (\exists G^R \in \GR, 1 \leqslant f(G^R) \leqslant n+1)) \\
\wedge & (\forall G^R \in \GR, f(G^R) \leqslant n+1))
\end{array}\right.,$
\item $\forall G \in \UU, n<0, (f(G) = n) \Rightarrow ((\exists G^R \in \GR, f(G^R) = n+1) \wedge (\forall G^R \in \GR, f(G^R) \leqslant n+1))$,
%\item $\forall G \in \UU, n \leqslant 0, (f(G) = n) \Rightarrow (((\GL \neq \emptyset) \Rightarrow (\exists G^L \in \GL, -1 \geqslant f(G^L) \geqslant n-1)) \wedge (\forall G^L \in \GL, f(G^L) \geqslant n-1))$.
\item $\forall G \in \UU, n \leqslant 0, (f(G) = n) \Rightarrow
\left\{ \begin{array}{r@{\,}l}
( & ((\GL \neq \emptyset) \Rightarrow (\exists G^L \in \GL, -1 \geqslant f(G^L) \geqslant n-1)) \\
\wedge & (\forall G^L \in \GL, f(G^L) \geqslant n-1))
\end{array}\right..$
\end{enumerate}
Then $\UU$ has quotient $\qz$, having $f$ as a witness function.
\end{lem}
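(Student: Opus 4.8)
The goal is to verify that the given $f$ is a \emph{witness function} in the sense of Proposition~\ref{prop:Z}; that is, I must establish the two clauses of that proposition, namely additivity (i) $f(G+H)=f(G)+f(H)$ and the outcome rule (ii) relating $o^-(G)$ to the sign of $f(G)$. Both will be proved by induction on the game tree, the only inputs being the four local conditions a)--d). Before starting I would extract from a)--d) two universal bounds valid for \emph{every} $G\in\UU$: every Left option satisfies $f(G^L)\geqslant f(G)-1$ (combine the ``$\forall$'' parts of a) and d)), and every Right option satisfies $f(G^R)\leqslant f(G)+1$ (combine b) and c)). I would also record the sharper ``existence'' consequences: writing $\mu(G)=\min_{G^{L}}f(G^{L})$ and $\nu(G)=\max_{G^{R}}f(G^{R})$ (with the conventions $\mu(G)=+\infty$ for a Left end and $\nu(G)=-\infty$ for a Right end), conditions a)--d) yield, for games having the relevant options, the equivalences $f(G)>0\iff\mu(G)\geqslant 0$ and $f(G)<0\iff\nu(G)\leqslant 0$, together with $f(G)=\mu(G)+1$ in the first case and $f(G)=\nu(G)-1$ in the second.

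For clause (ii) I would induct on $G$ and split on the sign of $f(G)$. If $f(G)>0$, then by a) every Left option has $f\geqslant 0$, so by the induction hypothesis each $G^L$ has outcome $\N$ or $\R$ and Right (to move) wins after any Left move; by b) Right either sits at a Right end or has a move to some $G^R$ with $f(G^R)\geqslant 1$, which has outcome $\R$; hence $o^-(G)=\R$. The case $f(G)<0$ is symmetric via c) and d). If $f(G)=0$, then d) (resp.\ b)) provides, whenever the relevant move exists, a Left move to $f=-1$ (outcome $\L$) and a Right move to $f=1$ (outcome $\R$), so the first player always wins and $o^-(G)=\N$; the mis\`ere end conventions are absorbed by the fact that the relevant ``$\exists$'' clause guarantees a move precisely when one is needed.

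For clause (i) I would induct on the pair $(G,H)$ and use the recursion packaged by the preliminary observations, namely
\[
f(K)=
\begin{cases}
\mu(K)+1 & \text{if $K$ has a Left option and }\mu(K)\geqslant 0,\\
\nu(K)-1 & \text{if $K$ has a Right option and }\nu(K)\leqslant 0,\\
0 & \text{otherwise.}
\end{cases}
\]
Since the options of $G+H$ are exactly the games $G^L+H$, $G+H^L$, $G^R+H$ and $G+H^R$, the induction hypothesis gives $\mu(G+H)=\min(\mu(G)+f(H),\,f(G)+\mu(H))$ and $\nu(G+H)=\max(\nu(G)+f(H),\,f(G)+\nu(H))$. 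Writing $p=f(G)$ and $q=f(H)$, a case analysis on the signs of $p$ and $q$ then feeds these into the recursion and returns $f(G+H)=p+q$: when $p,q$ share a sign the ``dominant'' exact relations close the case immediately, and when the signs differ but $p+q>0$ (resp.\ $<0$) the Left (resp.\ Right) summand still controls $\mu(G+H)$ (resp.\ $\nu(G+H)$), again landing in the correct branch.

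The main obstacle is the borderline subcase of (i) where $p$ and $q$ have opposite signs with $p+q=0$. Here the exact determination $\mu(G)=p-1$ is available only because $p>0$, and $\nu(H)=q+1$ only because $q<0$, so neither summand alone pins down $f(G+H)$; instead I would invoke the two universal bounds isolated at the start to show that $\mu(G+H)=-1$ and $\nu(G+H)=1$ \emph{exactly}, which defeats both value conditions and places $G+H$ in the ``otherwise'' branch, giving $f(G+H)=0=p+q$. With (i) and (ii) in hand, Proposition~\ref{prop:Z} applies and $\UU$ has quotient $\qz$ with witness function $f$.
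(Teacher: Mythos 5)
Your proof is correct and follows essentially the same route as the paper: both verify the two clauses of Proposition~\ref{prop:Z} by induction on games, handling the outcome clause via the existence parts of a)--d) and additivity via upper and lower bounds on the $f$-values of the options of $G+H$, with the same delicate subcase $f(G)+f(H)=0$ with opposite signs. Your $\mu/\nu$ recursion is a clean repackaging of the paper's inline case analysis rather than a genuinely different argument.
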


\begin{proof}
%We prove that keeping the same function $f$ satisfies the two conditions of Definition~\ref{def:Z} by induction on the games in $\UU$.
We prove that $f$ satisfies the two conditions of Proposition~\ref{prop:Z} by induction on the games in $\UU$.
First consider a game $G$ in $\UU$.
If $G$ has no option, then it cannot satisfy the right part of the implications $a)$ and $c)$.
Hence $f(G) = 0$ which corresponds since $o^-(G) = \N$.

Assume $G$ is a game such that $f(G) > 0$.
From $a)$, it has a Left option, and all its Left options have mis\`ere outcome $\N$ or $\R$ by induction.
Hence Right has a winning strategy playing second in $G$.
From $b)$, Right can move to a mis\`ere $\R$-position by induction if he has any move.
Hence Right has a winning strategy playing first in $G$, which proves $G$ has mis\`ere outcome $\R$.

We can prove similarly that if $f(G) < 0$, $G$ has mis\`ere outcome $\L$.

Now assume $f(G) = 0$.
From $b)$, Right can move to a mis\`ere $\R$-positions by induction if he has any move.
Hence Right has a winning strategy playing first in $G$.
Similarly, from $d)$, Left has a winning strategy playing first in $G$.
Hence $G$ has mis\`ere outcome $\N$.

Now consider two games $G$ and $H$ in $\UU$.
Assume first $f(G) + f(H) > 0$.
Then at least one among $f(G)$ and $f(H)$ is positive.
Without loss of generality, we may assume $f(G)$ is positive.
By $a)$, there exists a Left option $G^{L_1}$ of $G$ such that $f(G^{L_1}) = f(G)-1$.
Hence we have a Left option $G^{L_1} + H$ of $G+H$ such that $f(G^{L_1} + H) = f(G^{L_1}) + f(H) = f(G) - 1 + f(H)$ by induction.
By $a)$, we have $f(G+H)$ is at most $f(G) + f(H)$.
%Hence $f(G+H)$ is at most $f(G) + f(H)$, the opposite contradicting $a)$, as $G+H$ has a Left option mapped to $f(G)+f(H)-1$, which is smaller than $k-1$ for any $k$ greater than $f(G)+f(H)$.
Similarly, as all Left options of $G+H$ are of the form $G^L + H$ or $G + H^L$, using $a)$ and $d)$ and induction, we can say that they are all mapped to integers greater than or equal to $f(G)+f(H)-1$.
By $d)$, as $G+H$ has a Left option and all Left options mapped to non-negative integers, $f(G+H)$ is positive.
For any positive integer $k$ less than $f(G)+f(H)$, there is no Left move from $G+H$ to a position mapped to $k-1$, so by $a)$, $f(G+H)$ cannot be $k$.
%This prevent $f(G+H)$ to be both positive and less than $f(G)+f(H)$ since it would have no move to $k-1$ for any positive $k$ less than $f(G)+f(H)$, contradicting $a)$.
%$f(G+H)$ is also positive since the opposite would contradict $d)$, $G+H$ having a Left option, and all Left options mapped to non-negative integers.
Hence $f(G+H) = f(G) + f(H)$.

Similarly, we have that if $f(G) + f(H) < 0$, then $f(G+H) = f(G) + f(H)$.

Now assume $f(G) + f(H) = 0$.
First assume $f(G) = f(H) = 0$.
If $G+H$ has no Left option, it cannot satisfy the right part of the implication $a)$, hence it is mapped to a non-positive integer.
Assume now $G+H$ has a Left option, it means $G$ or $H$ has a Left option.
Without loss of generality, we may assume $G$ has a Left option.
From $d)$, we know there exists a Left option $G^{L_1}$ of $G$ such that $f(G^{L_1}) = -1$.
Hence the Left option $G^{L_1} + H$ of $G+H$ is such that $f(G^{L_1} + H) = f(G^L_1) + f(H) = -1$ by induction, which implies that $f(G+H)$ is non-positive, as the opposite would contradict $a)$.
Similarly, we prove that $f(G+H)$ is non-negative.
Hence we have $f(G+H) = 0$.
Assume now without loss of generality that $f(G) > 0 > f(H)$.
From $a)$, we get a Left option $G^{L_1}$ of $G$ such that $f(G^{L_1}) = f(G)-1$.
Similarly as above, this implies $f(G+H)$ is non-positive.
We prove that $f(G+H)$ is non-negative in a similar way.
Hence we have $f(G+H) = 0$, which concludes the proof.
\end{proof}

We now prove the other way in four steps.
First we show that players cannot have too good moves.

\begin{lem}
\label{lem:Z=>n-1}
Let $\UU$ be a universe with quotient $\qz$ and $f$ a witness function of $\UU$.
Then for any game $G$ in $\UU$, for any Left option $G^L$ of $G$, we have $f(G^L) \geqslant f(G) - 1$.
\end{lem}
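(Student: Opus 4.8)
The plan is to argue by contradiction, exploiting the additivity of $f$ together with the outcome characterisation in Proposition~\ref{prop:Z}. Suppose that some game $G \in \UU$ has a Left option $G^L$ with $f(G^L) \leqslant f(G) - 2$. The strategy is to manufacture a test game $X \in \UU$ that pushes $G$ into an $\R$-position while simultaneously pushing $G^L$ into an $\L$-position, so that Left's move from $G+X$ to $G^L+X$ becomes a winning move — contradicting the fact that Right wins $G+X$ even when Left starts.

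Concretely, I would first use the surjectivity of $f$ to choose $X \in \UU$ with $f(X) = 1 - f(G)$. Since $\UU$ is a universe it is closed under taking options and under sum, so both $G+X$ and $G^L + X$ lie in $\UU$, and $f$ applies to them. By additivity, $f(G+X) = f(G) + f(X) = 1 > 0$, so by condition (ii) of Proposition~\ref{prop:Z} we have $o^-(G+X) = \R$; in particular Right wins $G+X$ even when Left moves first. On the other hand, $f(G^L + X) = f(G^L) + f(X) = f(G^L) - f(G) + 1 \leqslant -1 < 0$, so $o^-(G^L + X) = \L$, meaning Left wins $G^L + X$ regardless of who moves first.

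These two facts collide. Consider the game $G+X$ with Left to move: she may legally move to $G^L + X$, after which it is Right's turn. Since $o^-(G^L + X) = \L$, Left wins $G^L + X$ with Right to move. Hence Left, starting in $G+X$, has a move leading to a position she wins, so Right cannot win $G+X$ when Left starts — contradicting $o^-(G+X) = \R$. Therefore no such $G^L$ exists and $f(G^L) \geqslant f(G) - 1$ for every Left option.

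I expect the only delicate point to be the turn-bookkeeping in the final paragraph: one must be careful that after Left's move to $G^L + X$ it is genuinely Right's turn, and that $o^-(G^L+X) = \L$ (Left wins whoever starts) is exactly the statement guaranteeing Left a win in that \emph{Right-to-move} configuration. Everything else is a routine application of additivity of $f$ and the outcome dictionary, with surjectivity of $f$ supplying the required witness $X$.
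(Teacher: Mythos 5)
Your proof is correct and is essentially the paper's own argument: both choose, by surjectivity, a test game $X$ with $f(X)=1-f(G)$, use additivity to get $o^-(G+X)=\R$, and conclude that every Left option $G^L+X$ must avoid being a win for Left with Right to move, forcing $f(G^L)\geqslant f(G)-1$. The only difference is presentational — you phrase it as a contradiction while the paper argues directly — and your turn-bookkeeping at the end is sound.
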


\begin{proof}
Let $G$ be a game in $\UU$ and $n$ the image of $G$ through $f$.
As $f$ is surjective, we can find a game $H$ in $\UU$ such that $f(H) = 1-n$.
We have $f(G+H) = 1$, so $G+H$ has mis\`ere outcome $\R$.
Hence any first move of Left is losing, which means their images through $f$ should be at least $n-1$.
\end{proof}

We now show that when Left is supposed to win playing first and has a move, she has an interesting move.

\begin{lem}
\label{lem:Z=>exn-1}
Let $\UU$ be a universe with quotient $\qz$ and $f$ a witness function of $\UU$.
Let $G$ be a game in $\UU$ such that $f(G)$ is non-positive and $G$ has a Left option.
Then $G$ has a Left option $G^L$ such that $-1 \geqslant f(G^L) \geqslant f(G)-1$.
\end{lem}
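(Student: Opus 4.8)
The plan is to split the desired double inequality $-1 \geqslant f(G^L) \geqslant f(G)-1$ into its two halves and produce a single Left option $G^L$ satisfying both. The lower bound comes for free: Lemma~\ref{lem:Z=>n-1} guarantees that \emph{every} Left option $G^L$ of $G$ already satisfies $f(G^L) \geqslant f(G)-1$. So all the real work lies in producing a Left option whose image is at most $-1$, i.e.\ strictly negative.

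For that upper bound I would argue through misère outcomes. Since $f(G) \leqslant 0$, condition (ii) of Proposition~\ref{prop:Z} gives $o^-(G) \in \{\N,\L\}$, so in either case Left wins $G$ playing first. Because $G$ has a Left option by hypothesis, Left is forced to move, and "Left wins playing first" means precisely that she has some Left option $G^L$ from which she wins with Right to move.

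Next I would analyse $o^-(G^L)$. That Left wins $G^L$ with Right to move means the player to move (Right) loses, so $o^-(G^L) \in \{\P,\L\}$. The crucial point is that a universe with quotient $\qz$ has no $\P$-position: by condition (ii) of Proposition~\ref{prop:Z} every game has outcome $\N$, $\L$, or $\R$ according to the sign of its image under $f$, and $\P$ never occurs. Hence $o^-(G^L) = \L$, and one final application of condition (ii) yields $f(G^L) < 0$, that is $f(G^L) \leqslant -1$. Combined with the bound $f(G^L) \geqslant f(G)-1$ from Lemma~\ref{lem:Z=>n-1}, this gives $-1 \geqslant f(G^L) \geqslant f(G)-1$, as required.

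The step I expect to need the most care is the translation between "Left has a winning first move" and the outcome of the resulting position, together with the explicit appeal to the absence of $\P$-positions. This is exactly where the hypothesis that $\UU$ has quotient $\qz$ (rather than some arbitrary $\P$-free structure) does its work: it forces Left's winning target to be a genuine $\L$-position, and hence to have negative image, rather than merely being some second-player win whose image we could not otherwise control.
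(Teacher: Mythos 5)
Your proof is correct and follows essentially the same route as the paper: the lower bound from Lemma~\ref{lem:Z=>n-1}, then the observation that $o^-(G)\in\{\N,\L\}$ forces a winning Left move to some $G^L$ of outcome $\P$ or $\L$, hence (as $\qz$ admits no $\P$-positions) of negative image. Your write-up is just slightly more explicit than the paper's about why the $\P$ case is excluded.
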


\begin{proof}
As $f(G)$ is non-positive, $G$ has mis\`ere outcome $\N$ or $\L$.
Hence Left wins playing first in $G$, either because $G$ has no Left option or because she has a winning move.
$G$ having a Left option puts us in the second case and there is a winning Left move from $G$ to some $G^L$.
By Lemma~\ref{lem:Z=>n-1}, we have $f(G^L) \geqslant f(G)-1$.
Since $G^L$ is a winning move, it has outcome $\P$ or $\L$.
Hence $f(G^L)$ is negative.
Therefore, we have $-1 \geqslant f(G^L) \geqslant f(G)-1$.
\end{proof}

For the next part, we need to ensure we can find Right ends in $\UU$ whose images through the witness function cover all positive number.
Hence we consider the game $\{0|\cdot\}$.

\begin{lem}
\label{lem:Z=>0.1}
Let $\UU$ be a universe with quotient $\qz$ and $f$ a witness function of $\UU$.
Then $\{0|\cdot\} \in \UU$ and $f(\{0|\cdot\})=1$.
\end{lem}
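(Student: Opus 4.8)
The plan is to separate the statement into two tasks: first show that the specific game $\{0|\cdot\}$ genuinely occurs inside $\UU$, and then compute its image under $f$. The real obstacle is the first task: since $\UU$ is closed only under taking options, sums and conjugates, we cannot \emph{build} $\{0|\cdot\}$ from below out of $0$; we are forced to exhibit it as an honest subposition (follower) of a game already known to lie in $\UU$. So the heart of the argument is a downward, tree-structural one.

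I would begin by producing a game of positive birthday in $\UU$. Since $f$ is surjective, pick $G$ with $f(G)=1$; because $f(G+H)=f(G)+f(H)$ forces $f(0)=0$, the game $G$ cannot be $0$, so $G$ has birthday at least $1$. The key combinatorial observation is then that every game of birthday at least $1$ has a follower of birthday exactly $1$. To see this, among all followers of $G$ of birthday at least $1$ (a nonempty set, as it contains $G$) choose one, $F$, of minimal birthday. Then $F$ has at least one option, and every option of $F$ is again a follower of $G$ of strictly smaller birthday, hence of birthday $0$, i.e. equal to $0$. Thus all options of $F$ equal $0$, so $F$ is one of $\{0|\cdot\}$, $\{\cdot|0\}$, $\{0|0\}$, that is, $F\in\{1,\overline1,*\}$, and $F\in\UU$ by closure under taking options. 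Now $*=\{0|0\}$ is a mis\`ere $\P$-position (each player moving first is forced to hand the opponent the empty game and thereby make the last move), while $\UU$ contains no $\P$-position; hence $F\neq *$, so $F\in\{1,\overline1\}$. Closure under conjugation then yields both $1\in\UU$ and $\overline1\in\UU$, and in particular $\{0|\cdot\}=1\in\UU$.

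It then remains to compute $f(\{0|\cdot\})$, which is routine. Its Left option $0$ lies in $\UU$ with $f(0)=0$. The game $\{0|\cdot\}$ is a Right end, and a one-line check gives outcome $\R$ (Right moving first cannot move and wins, while Left moving first must go to $0$ after which Right cannot move and wins), so the outcome condition forces $f(\{0|\cdot\})>0$. Applying Lemma~\ref{lem:Z=>n-1} to the Left option $0$ gives $0=f(0)\geqslant f(\{0|\cdot\})-1$, hence $f(\{0|\cdot\})\leqslant 1$; combining the two bounds yields $f(\{0|\cdot\})=1$.

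I expect the genuinely delicate point to be the middle paragraph: recognising that the correct closure to exploit is the \emph{downward} one, and that a minimal-birthday-follower argument forces a birthday-one position to appear, which the absence of $\P$-positions then identifies as $1$ or $\overline1$ rather than $*$. Everything after that, including the value $f(\{0|\cdot\})=1$, is forced by the already-established Lemma~\ref{lem:Z=>n-1} together with the defining outcome condition of $\qz$.
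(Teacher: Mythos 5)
Your proof is correct and follows essentially the same route as the paper: produce a birthday-$1$ position in $\UU$ via closure under followers, rule out $*=\{0|0\}$ because $\qz$ has no $\P$-positions, obtain $\{0|\cdot\}$ by closure under conjugation, and then pin down $f(\{0|\cdot\})=1$ by combining its outcome $\R$ with Lemma~\ref{lem:Z=>n-1} applied to its Left option $0$. Your minimal-birthday-follower argument is just a more explicit version of the paper's one-line appeal to surjectivity and closure under taking options.
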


\begin{proof}
As $f$ is surjective, there is an infinite number of games in $\UU$.
As $\UU$ is closed under taking options, there exists some game in $\UU$ with birthday $1$.
The three games with birthday $1$ are $\{0|\cdot\}$, $\{0|0\}$ and $\{\cdot|0\}$.
As $\{0|0\}$ has mis\`ere outcome $\P$, it cannot be in $\UU$.
As $\UU$ is closed under conjugation and $\{0|\cdot\}$ and $\{\cdot|0\}$ are each other's conjugates, both are in $\UU$.
$\{0|\cdot\}$ has mis\`ere outcome $\R$, hence its image through $f$ must be positive.
Similarly, $0$ has mis\`ere outcome $\N$ and its image through $f$ is $0$.
Having a Left option to $0$, whose image through $f$ is $0$, $\{0|\cdot\}$'s image through $f$ must be at most $1$ by Lemma~\ref{lem:Z=>n-1}.
Hence $f(\{0|\cdot\})=1$.
\end{proof}

We can now prove that when Right loses playing first, he has a move that is not too bad.

\begin{lem}
\label{lem:Z=>n+1}
Let $\UU$ be a universe with quotient $\qz$ and $f$ a witness function of $\UU$.
Let $G$ be a game in $\UU$ such that $f(G)$ is negative.
Then $G$ has a Right option $G^R$ such that $f(G^R) = f(G)+1$.
\end{lem}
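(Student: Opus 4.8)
The plan is to exploit the surjectivity of $f$ twice: once to bound all Right options from above by the mirror of Lemma~\ref{lem:Z=>n-1}, and once---by playing $G$ against a large Right end---to produce a single Right option that attains the bound. Write $n = f(G) < 0$ throughout, and recall from Proposition~\ref{prop:Z} that a game of $\UU$ has outcome $\N$, $\L$ or $\R$ according as its image under $f$ is zero, negative or positive; in particular \emph{no} game of $\UU$ has outcome $\P$, a fact I will use to discard $\P$ at every branch. I would first note that $G$ is not a Right end: otherwise Right, unable to move, would win playing first under the misère convention, contradicting $o^-(G)=\L$. Thus $G$ has at least one Right option, which already settles the existence part of the statement.

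Next I would establish the upper bound that every Right option $G^R$ satisfies $f(G^R)\leqslant n+1$, arguing exactly as in Lemma~\ref{lem:Z=>n-1} but on the Right. Using surjectivity, choose $H\in\UU$ with $f(H)=-n-1$ (note $-n-1\geqslant 0$). Then $f(G+H)=-1$, so $o^-(G+H)=\L$ and Right loses playing first in $G+H$. Hence each of Right's moves $G^R+H$ must leave Left, to move, in a win, i.e.\ $o^-(G^R+H)\in\{\L,\N\}$, so $f(G^R+H)\leqslant 0$ and by additivity $f(G^R)\leqslant n+1$.

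Finally I would exhibit a Right option attaining $n+1$. By Lemma~\ref{lem:Z=>0.1}, $\{0|\cdot\}\in\UU$ with $f(\{0|\cdot\})=1$, and it is a Right end; since $\UU$ is closed under sum and $f$ is additive, $H'=(-n)\{0|\cdot\}$ is a Right end with $f(H')=-n$. Then $f(G+H')=0$, so $o^-(G+H')=\N$ and Right, playing first, wins, hence has a winning move. Because $H'$ admits no Right move and $G$ is not a Right end, every move of Right in $G+H'$ has the form $G^R+H'$, and a winning one leaves Left, to move, in a loss, i.e.\ $o^-(G^R+H')\in\{\R,\P\}$; excluding $\P$ this forces $f(G^R+H')\geqslant 1$, whence $f(G^R)\geqslant n+1$. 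Combined with the upper bound of the previous paragraph, this Right option satisfies $f(G^R)=n+1$, as required.

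The step I expect to require the most care is isolating Right's play: knowing $f(G+H')$ only controls the outcome of the whole sum, not the component in which Right is forced to move, so it is essential that $H'$ be a Right end (supplied by Lemma~\ref{lem:Z=>0.1}) so that Right's winning move necessarily lives in $G$. Matching the resulting lower bound $f(G^R)\geqslant n+1$ against the conjugate-of-Lemma~\ref{lem:Z=>n-1} upper bound $f(G^R)\leqslant n+1$ is precisely what pins the value to exactly $n+1$; everywhere the absence of $\P$-positions is what lets me read ``winning/losing playing first'' directly off the sign of $f$.
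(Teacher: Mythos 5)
Your proof is correct and follows essentially the same route as the paper's: you invoke Lemma~\ref{lem:Z=>0.1} to form the Right end $(-f(G))\cdot\{0|\cdot\}$, use that the sum has outcome $\N$ to force a winning Right move inside $G$ with $f(G^R)\geqslant f(G)+1$, and cap it with the conjugate of Lemma~\ref{lem:Z=>n-1}. The only difference is that you re-derive that conjugate bound inline rather than citing it, which is harmless.
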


\begin{proof}
As $f(G)$ is negative, $G$ has mis\`ere outcome $\L$, and Right has a move in $G$.

By Lemma~\ref{lem:Z=>0.1}, we have $\{0|\cdot\}$ in $\UU$ and $f(\{0|\cdot\}) = 1$.
As $\UU$ is closed under sum, we have \mbox{$(-f(G)) \cdot \{0|\cdot\}$} in $\UU$.
We have $f(G + ((-f(G)) \cdot \{0|\cdot\})) = 0$ so $G + ((-f(G)) \cdot \{0|\cdot\})$ has mis\`ere outcome $\N$.
Hence Right has a winning move in $G + ((-f(G)) \cdot \{0|\cdot\})$, which has to be in $G$ since he has no move in $(-f(G)) \cdot \{0|\cdot\}$.
Therefore, there is a Right move from $G$ to some $G^R$ such that $f(G^R) > f(G)$.
By the conjugate version of Lemma~\ref{lem:Z=>n-1}, no Right option of $G$ may have an image through $f$ more than $f(G)+1$.
Hence $G^R$ is such that $f(G^R) = f(G)+1$.
\end{proof}

Note that this proof implies that the witness function of a universe with quotient $\qz$ is unique.
If $f'$ is a witness function of $\UU$, then $f'(\{0|\cdot\}) = 1$ and for any $G$ in $\UU$, if $f(G)$ is negative, $G + ((-f(G)) \cdot \{0|\cdot\})$ has outcome $\N$ so $f'(G + ((-f(G)) \cdot \{0|\cdot\}))=0$ and $f'(G) = f(G)$; similarly when $f(G)$ is positive; when $f(G)$ is $0$, $G$ has outcome $\N$ so $f'(G)$ is $0$ as well.

We can now state the other way of the characterisation.

\begin{theo}
\label{theo:Z<=>}
Let $\UU$ be a universe with quotient $\qz$.
Then the witness function $f$ of $\UU$ is such that:
\begin{enumerate}[a)]
\item $\forall G \in \UU, n>0, (f(G) = n) \Rightarrow ((\exists G^L \in \GL, f(G^L) = n-1) \wedge (\forall G^L \in \GL, f(G^L) \geqslant n-1))$,
%\item $\forall G \in \UU, n \geqslant 0, (f(G) = n) \Rightarrow (((\GR \neq \emptyset) \Rightarrow (\exists G^R \in \GR, 1 \leqslant f(G^R) \leqslant n+1)) \wedge (\forall G^R \in \GR, f(G^R) \leqslant n+1))$,
\item $\forall G \in \UU, n \geqslant 0, (f(G) = n) \Rightarrow
\left\{ \begin{array}{r@{\,}l}
( & ((\GR \neq \emptyset) \Rightarrow (\exists G^R \in \GR, 1 \leqslant f(G^R) \leqslant n+1)) \\
\wedge & (\forall G^R \in \GR, f(G^R) \leqslant n+1))
\end{array}\right.,$
\item $\forall G \in \UU, n<0, (f(G) = n) \Rightarrow ((\exists G^R \in \GR, f(G^R) = n+1) \wedge (\forall G^R \in \GR, f(G^R) \leqslant n+1))$,
%\item $\forall G \in \UU, n \leqslant 0, (f(G) = n) \Rightarrow (((\GL \neq \emptyset) \Rightarrow (\exists G^L \in \GL, -1 \geqslant f(G^L) \geqslant n-1)) \wedge (\forall G^L \in \GL, f(G^L) \geqslant n-1))$.
\item $\forall G \in \UU, n \leqslant 0, (f(G) = n) \Rightarrow
\left\{ \begin{array}{r@{\,}l}
( & ((\GL \neq \emptyset) \Rightarrow (\exists G^L \in \GL, -1 \geqslant f(G^L) \geqslant n-1)) \\
\wedge & (\forall G^L \in \GL, f(G^L) \geqslant n-1))
\end{array}\right..$
\end{enumerate}
\end{theo}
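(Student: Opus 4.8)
The plan is to assemble the four conditions directly from Lemmas~\ref{lem:Z=>n-1}, \ref{lem:Z=>exn-1} and \ref{lem:Z=>n+1}, together with their conjugate versions, so the proof is essentially a matter of reassembling the pieces already in hand. The one preliminary fact I would record first is that the witness function respects conjugation, namely $f(\overline{G}) = -f(G)$ for every $G \in \UU$. Indeed, since $\UU$ is a universe it is closed under conjugation, so $\overline{G} \in \UU$, and the map $G \mapsto -f(\overline{G})$ is again surjective, additive (because $\overline{G+H} = \overline{G}+\overline{H}$), and compatible with outcomes (because conjugation swaps $\L$ and $\R$ while fixing $\N$, and $f$ reads outcomes off its sign). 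Hence this map is a witness function of $\UU$, and by the uniqueness noted after Lemma~\ref{lem:Z=>n+1} it coincides with $f$. This identity is exactly what lets me invoke the conjugate of each lemma while keeping the same function $f$.

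Each of the four implications splits into a universal clause and an existential clause. I would first dispatch the universal clauses. The statements that every Left option $G^L$ satisfies $f(G^L) \geqslant n-1$ (the universal halves of $a)$ and $d)$) are precisely Lemma~\ref{lem:Z=>n-1}, which holds for all $G$ regardless of the sign of $n$. Applying Lemma~\ref{lem:Z=>n-1} to $\overline{G}$ and substituting $f(\overline{G}) = -f(G)$ yields that every Right option $G^R$ satisfies $f(G^R) \leqslant n+1$, which is the universal half of $b)$ and $c)$. Thus all four universal clauses reduce to a single lemma and its conjugate.

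For the existential clauses I would proceed lemma by lemma. When $n < 0$, the existence of a Right option with $f(G^R) = n+1$ (condition $c)$) is exactly Lemma~\ref{lem:Z=>n+1}; conjugating it, that is applying it to $\overline{G}$ with $f(\overline{G}) = -n < 0$ and translating the resulting Right option $\overline{G^L}$ back through $f(\overline{G^L}) = -f(G^L)$, produces for $n > 0$ a Left option with $f(G^L) = n-1$, which is the existential half of $a)$. Likewise, when $n \leqslant 0$ and $\GL \neq \emptyset$, the existence of a Left option with $-1 \geqslant f(G^L) \geqslant n-1$ (condition $d)$) is Lemma~\ref{lem:Z=>exn-1}; conjugating it, applied to $\overline{G}$ with $f(\overline{G}) = -n \leqslant 0$ and with $\overline{G}$ having a Left option iff $G$ has a Right option, gives for $n \geqslant 0$ and $\GR \neq \emptyset$ a Right option with $1 \leqslant f(G^R) \leqslant n+1$, the existential half of $b)$. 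Pairing each universal half with its existential half then establishes $a)$ through $d)$.

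I do not expect any genuine obstacle here beyond careful bookkeeping: the theorem is a direct reassembly of the preceding lemmas and their conjugates. The only point demanding real attention is the conjugation identity $f(\overline{G}) = -f(G)$ established at the outset, since every ``conjugate version'' of a lemma silently depends on it to guarantee that the conjugated statement is phrased in terms of the \emph{same} witness function $f$; the uniqueness of the witness function is precisely what makes this legitimate.
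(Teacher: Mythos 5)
Your proposal is correct and follows essentially the same route as the paper, which simply cites Lemmas~\ref{lem:Z=>n-1}, \ref{lem:Z=>exn-1}, \ref{lem:Z=>n+1} and their conjugate versions. The only addition is that you make explicit the identity $f(\overline{G})=-f(G)$ (via uniqueness of the witness function) that the paper leaves implicit when invoking ``conjugate versions,'' which is a welcome clarification but not a different argument.
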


\begin{proof}
The result is a combination of Lemmas~\ref{lem:Z=>n-1}, \ref{lem:Z=>exn-1}, \ref{lem:Z=>n+1} and their conjugate versions.
\end{proof}

With this characterisation, we can now prove the main theorem of this section.

\begin{theo}
Let $\UU_1$ and $\UU_2$ be two universes with quotient $\qz$.
Then $\Usum$ is a universe having quotient $\qz$.
\end{theo}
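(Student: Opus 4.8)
The plan is to produce a single surjective function $f\colon \Usum \to \ZZ$ satisfying conditions a)--d) of Lemma~\ref{lem:=>Z}, which then immediately yields that $\Usum$ has quotient $\qz$. Write $f_1$ and $f_2$ for the witness functions of $\UU_1$ and $\UU_2$. First I would check that $\Usum$ is indeed a universe: for $s_1+s_2$ and $t_1+t_2$ in $\Usum$ one has $(s_1+s_2)+(t_1+t_2)=(s_1+t_1)+(s_2+t_2)$ and $\overline{s_1+s_2}=\overline{s_1}+\overline{s_2}$, while every option of $s_1+s_2$ is obtained by moving in a single component; closure under sum, conjugation and taking options of $\UU_1$ and $\UU_2$ then gives the three closure properties. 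I will also record, via Lemma~\ref{lem:Z=>0.1} and its conjugate, that $0$, $\{0|\cdot\}$ and $\{\cdot|0\}$ all lie in $\UU_1$ and in $\UU_2$, with $f_i(\{0|\cdot\})=1$ and $f_i(\{\cdot|0\})=-1$.

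The core of the argument is to analyse the quantity $v(s_1,s_2)=f_1(s_1)+f_2(s_2)$ attached to a decomposition and to show it obeys the four conditions. Since by Theorem~\ref{theo:Z<=>} each $f_i$ already satisfies a)--d) on $\UU_i$, and since every option of $s_1+s_2$ is a component move, the bounding halves of the conditions are immediate: by Lemma~\ref{lem:Z=>n-1} every Left option of $s_1+s_2$ has $v$-value at least $v(s_1,s_2)-1$, and by its conjugate every Right option has $v$-value at most $v(s_1,s_2)+1$. The existence halves are the delicate point, and I expect them to be the main obstacle. For a) with $v(s_1,s_2)=n>0$, at least one component is positive, and condition a) in that component yields a Left move lowering $v$ by exactly $1$. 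For b) with $v(s_1,s_2)=n\geqslant 0$ and Right to move, I would split into cases: if some component $s_i$ has $f_i(s_i)<0$ then, by condition c) in $\UU_i$, it has a Right option raising its value by exactly $1$, so that component move sends $v$ to $n+1\in[1,n+1]$; otherwise both components are non-negative, Right's move lies in a component of non-negative value, and condition b) there produces a Right option whose value lands in $[1,n+1]$ once the non-negative value of the untouched component is added back. Conditions c) and d) follow by the conjugate analysis.

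With a)--d) established at the level of decompositions, the outcome induction in the proof of Lemma~\ref{lem:=>Z} applies verbatim and shows that $o^-(s_1+s_2)$ is $\N$, $\L$ or $\R$ exactly according to the sign of $v(s_1,s_2)$. This lets me settle well-definedness, which is the one subtlety hidden by the set notation $\Usum$: a game may admit several decompositions. For a fixed $G\in\Usum$ the outcomes of $G+k\{0|\cdot\}$ and $G+k\{\cdot|0\}$ depend only on $G$, and by the characterisation just obtained $v$ is the unique integer $m$ for which $|m|$ copies of the appropriate generator drive the outcome to $\N$ (copies of $\{\cdot|0\}$ when $m>0$ and of $\{0|\cdot\}$ when $m<0$). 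Hence $v(s_1,s_2)$ does not depend on the chosen decomposition, so $f(G):=v(s_1,s_2)$ is a genuine function on $\Usum$.

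It then remains to collect the pieces: $f$ is surjective because $f_1$ is and $f(s_1+0)=f_1(s_1)$ with $0\in\UU_2$, and $f$ satisfies a)--d) by the decomposition analysis above. Applying Lemma~\ref{lem:=>Z} to $\Usum$ with the witness function $f$ concludes that $\Usum$ is a universe with quotient $\qz$.
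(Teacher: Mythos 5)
Your proposal is correct and follows essentially the same route as the paper: define $f$ on a decomposition $G_1+G_2$ as $f_1(G_1)+f_2(G_2)$, verify conditions a)--d) of Lemma~\ref{lem:=>Z} case by case using Theorem~\ref{theo:Z<=>} (bounding halves from Lemma~\ref{lem:Z=>n-1} and its conjugate, existence halves by splitting on the signs of $f_1(G_1)$ and $f_2(G_2)$), and conclude. The one place you go beyond the paper is in explicitly checking that $f$ is well defined when a game admits several decompositions; the paper takes this for granted, so your argument via the outcomes of $G+k\{0|\cdot\}$ and $G+k\{\cdot|0\}$ is a correct and welcome addition rather than a divergence.
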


\begin{proof}
Let $G$ and $H$ be two games in $\Usum$.
We can write $G = G_1 + G_2$ and $H = H_1 + H_2$ such that $G_1,H_1 \in \UU_1$ and $G_2,H_2 \in \UU_2$.
Then $\overline{G} = \overline{G_1} + \overline{G_2} \in \Usum$. %since $\UU_1$ and $\UU_2$ are closed under conjugation.
Hence $\Usum$ is closed under conjugation.
$G+H = (G_1+H_1) + (G_2+H_2) \in \Usum$. %since $\UU_1$ and $\UU_2$ are closed under disjunctive sum.
Hence $\Usum$ is closed under disjunctive sum.
An option of $G$ is either of the form $G_1$ summed with an option of $G_2$ or of the form $G_2$ summed with an option of $G_1$.
%As both $\UU_1$ and $\UU_2$ are closed under taking options, any option of $G$ is in $\Usum$.
Hence $\Usum$ is closed under taking options.
Therefore $\Usum$ is a universe.

Let $f_1$ and $f_2$ be the witness functions of $\UU_1$ and $\UU_2$ respectively.
We define $f:\Usum\rightarrow\ZZ$ as $f(G_1+G_2) = f_1(G_1)+f_2(G_2)$ for any $G_1$ in $\UU_1$ and $G_2$ in $\UU_2$.
Let $G = G_1 + G_2$ be a game in $\Usum$ such that $G_1 \in \UU_1$ and $G_2 \in \UU_2$.
We prove by induction on $G$ that $f$ satisfies the hypothesis of Lemma~\ref{lem:=>Z}, using the fact that both $f_1$ and $f_2$ satisfy these hypothesis by Theorem~\ref{theo:Z<=>}.
Any Left option of $G$ is of the form $G_1+G_2^L$ or $G_1^L+G_2$.
As any $G_1^L$ and $G_2^L$ are such that $f_1(G_1^L) \geqslant f_1(G_1) - 1$ and $f_2(G_2^L) \geqslant f_2(G_2) - 1$, we have $f(G^L) \geqslant f(G) - 1$ for any Left option $G^L$ of $G$.
Similarly, we have $f(G^R) \leqslant f(G) + 1$ for any Right option $G^R$ of $G$.

Assume first $f(G)$ is positive.
Then $f_1(G_1)$ or $f_2(G_2)$ is positive.
Without loss of generality, we may assume $f_1(G_1)$ to be positive.
Then there exists a Left option $G_1^L$ of $G_1$ such that $f_1(G_1^L) = f_1(G_1) - 1$.
Hence the Left option $G_1^L + G_2$ of $G_1 + G_2$ has an image through $f$ with value $f(G)-1$.
Similarly, if $f(G)$ is negative, there exists a Right option $G^R$ of $G$ such that $f(G^R) = f(G)+1$.

Assume now $f(G)$ is non-negative.
If there is no Right option from $G$, it is fine.
Assume then Right has a move from $G$.
If $f_1(G_1)$ is negative, there is a Right option $G_1^R$ of $G_1$ such that $f_1(G_1^R) = f_1(G_1) + 1$, hence the Right option $G_1^R + G_2$ of $G_1+G_2$ has an image through $f$ with value $f(G)+1$.
Similarly, there is a Right option $G_1 + G_2^R$ of $G_1+G_2$ having an image through $f$ with value $f(G)+1$ whenever $f_2(G_2)$ is negative.
Assume both $f_1(G_1)$ and $f_2(G_2)$ are non-negative.
Without loss of generality, since Right has an option from $G_1+G_2$, we may assume $G_1$ has a Right option.
As $f_1(G_1)$ is non-negative, there exists a Right option $G_1^R$ of $G_1$ such that $1 \leqslant f_1(G_1^R) \leqslant f_1(G_1) + 1$.
Then the Right option $G_1^R + G_2$ of $G_1 + G_2$ is such that $1 \leqslant 1 + f_2(G_2) \leqslant f(G_1^R + G_2) \leqslant f(G) + 1$.
Similarly, if $f(G)$ is non-positive, either Left has no move from $G$ or there exists a Left option $G^L$ of $G$ such that $-1 \geqslant f(G^L) \geqslant f(G) - 1$.

Therefore $\Usum$ is a universe with quotient $\qz$.
\end{proof}

This result does not seem to generalise easily to other quotients since we had to look at the possible moves of all positions in every equivalence class.
Actually, the result is not true for any quotient.
Call $*$ the game $\{0|0\}$, $*2$ the game $\{0,*|0,*\}$, $2_\#$ the game $\{*2|*2\}$, $2_\#0$ the game $\{0,2_\#|0,2_\#\}$ and $2_\#20$ the game $\{0,*2,2_\#|0,*2,2_\#\}$.
Plambeck~\cite{eqctrex} found that the closures $\cl(2_\#0)$ and $\cl(2_\#20)$ by sum and followers of the last two games share the same quotient, but their sum does not.
Their common quotient can be seen as the following:
$$(\langle a,b,c|a^2=1,b^3=b,b^2c=c,c^3=ac^2\rangle,\{a,b^2,bc,c^2\},\emptyset,\emptyset)$$ with fourteen elements.
What is surprising, and might explain why the sum gets a bigger quotient, is that the common elements of these two universes are not always mapped to the same element of the quotient.
For example, $*2$ is mapped to $b$ from $\cl(2_\#0)$ but to $ab$ from $\cl(2_\#20)$.
This situation cannot happen with $\qz$ because of Lemma~\ref{lem:Z=>0.1}.

\delete{
\begin{quest}
Do we have that whenever two universes $\UU_1$ and $\UU_2$ share the same quotient, then their sum $\Usum$ shares this quotient as well?
\end{quest}

Notice that we consider the tetrapartition of the elements as a part of the quotient, otherwise we would already know of a counterexample.
Call $*$ the game $\{0|0\}$, $*2$ the game $\{0,*|0,*\}$, $2_\#$ the game $\{*2|*2\}$, $2_\#0$ the game $\{0,2_\#|0,2_\#\}$ and $2_\#20$ the game $\{0,*2,2_\#|0,*2,2_\#\}$ (see Figure~\ref{fig:ctrex}).
Plambeck~\cite{eqctrex} found that the quotients of the closures $\cl(2_\#0)$ and $\cl(2_\#20)$ by sum and followers of the last two games share the same fourteen elements, but their sum has a quotient with thirty elements.
However, these two quotients do not share their $\P$-positions, hence they are not completely equal and it is not so much of a surprise the quotient of the sum gets even more different.
}

\begin{figure}
\begin{center}

\begin{tikzpicture}
[thick,scale=1,
     vertex/.style={circle,draw=white!100,inner sep=1pt,minimum
size=2mm,fill=white!100},
     blackvertex/.style={circle,draw,inner sep=0pt,minimum
size=1.5mm,fill=black!100},
     clause/.style={circle,draw,inner sep=0pt,minimum
size=3mm,fill=white!100}]

% positions

\coordinate (1) at (-6,7);
\coordinate (2) at (-5.5,8);
\coordinate (3) at (-5,7);
\coordinate (4) at (-5,8);
\coordinate (5) at (-4.75,9);
\coordinate (6) at (-4.5,8);
\coordinate (7) at (-4.5,7);
\coordinate (8) at (-4,8);
\coordinate (9) at (-3.5,7);

\coordinate (10) at (-3.25,10);

\coordinate (11) at (-3,7);
\coordinate (12) at (-2.5,8);
\coordinate (13) at (-2,7);
\coordinate (14) at (-2,8);
\coordinate (15) at (-1.75,9);
\coordinate (16) at (-1.5,8);
\coordinate (17) at (-1.5,7);
\coordinate (18) at (-1,8);
\coordinate (19) at (-0.5,7);

\coordinate (20) at (-1,10);
\coordinate (21) at (0,11);
\coordinate (22) at (1,10);

\coordinate (23) at (0.5,7);
\coordinate (24) at (1,8);
\coordinate (25) at (1.5,7);
\coordinate (26) at (1.5,8);
\coordinate (27) at (1.75,9);
\coordinate (28) at (2,8);
\coordinate (29) at (2,7);
\coordinate (30) at (2.5,8);
\coordinate (31) at (3,7);

\coordinate (32) at (3.25,10);

\coordinate (33) at (3.5,7);
\coordinate (34) at (4,8);
\coordinate (35) at (4.5,7);
\coordinate (36) at (4.5,8);
\coordinate (37) at (4.75,9);
\coordinate (38) at (5,8);
\coordinate (39) at (5,7);
\coordinate (40) at (5.5,8);
\coordinate (41) at (6,7);

\coordinate (42) at (0,6);

%edges
\draw (1)--(2)--(3);
\draw (4)--(5)--(6);
\draw (7)--(8)--(9);
\draw (2)--(5)--(8);

\draw (5)--(10)--(15);

\draw (11)--(12)--(13);
\draw (14)--(15)--(16);
\draw (17)--(18)--(19);
\draw (12)--(15)--(18);

\draw (20)--(21)--(22);
\draw (10)--(21)--(32);

\draw (23)--(24)--(25);
\draw (26)--(27)--(28);
\draw (29)--(30)--(31);
\draw (24)--(27)--(30);

\draw (27)--(32)--(37);

\draw (33)--(34)--(35);
\draw (36)--(37)--(38);
\draw (39)--(40)--(41);
\draw (34)--(37)--(40);

%labels

\draw (1) node[blackvertex] {};
\draw (2) node[blackvertex] {};
\draw (3) node[blackvertex] {};
\draw (4) node[blackvertex] {};
\draw (5) node[blackvertex] {};
\draw (6) node[blackvertex] {};
\draw (7) node[blackvertex] {};
\draw (8) node[blackvertex] {};
\draw (9) node[blackvertex] {};
\draw (10) node[blackvertex] {};
\draw (11) node[blackvertex] {};
\draw (12) node[blackvertex] {};
\draw (13) node[blackvertex] {};
\draw (14) node[blackvertex] {};
\draw (15) node[blackvertex] {};
\draw (16) node[blackvertex] {};
\draw (17) node[blackvertex] {};
\draw (18) node[blackvertex] {};
\draw (19) node[blackvertex] {};
\draw (20) node[blackvertex] {};
\draw (21) node[blackvertex] {};
\draw (22) node[blackvertex] {};
\draw (23) node[blackvertex] {};
\draw (24) node[blackvertex] {};
\draw (25) node[blackvertex] {};
\draw (26) node[blackvertex] {};
\draw (27) node[blackvertex] {};
\draw (28) node[blackvertex] {};
\draw (29) node[blackvertex] {};
\draw (30) node[blackvertex] {};
\draw (31) node[blackvertex] {};
\draw (32) node[blackvertex] {};
\draw (33) node[blackvertex] {};
\draw (34) node[blackvertex] {};
\draw (35) node[blackvertex] {};
\draw (36) node[blackvertex] {};
\draw (37) node[blackvertex] {};
\draw (38) node[blackvertex] {};
\draw (39) node[blackvertex] {};
\draw (40) node[blackvertex] {};
\draw (41) node[blackvertex] {};

\draw (42) node[vertex] {$2_\#0$};

\coordinate (01) at (-6,0);
\coordinate (02) at (-5.5,1);
\coordinate (03) at (-5,0);
\coordinate (04) at (-5,1);
\coordinate (05) at (-4.75,2);
\coordinate (06) at (-4.5,1);
\coordinate (07) at (-4.5,0);
\coordinate (08) at (-4,1);
\coordinate (09) at (-3.5,0);

\coordinate (010) at (-3.25,3);

\coordinate (011) at (-3,0);
\coordinate (012) at (-2.5,1);
\coordinate (013) at (-2,0);
\coordinate (014) at (-2,1);
\coordinate (015) at (-1.75,2);
\coordinate (016) at (-1.5,1);
\coordinate (017) at (-1.5,0);
\coordinate (018) at (-1,1);
\coordinate (019) at (-0.5,0);

\coordinate (020) at (-1,3);
\coordinate (021) at (0,4);
\coordinate (022) at (1,3);

\coordinate (023) at (0.5,0);
\coordinate (024) at (1,1);
\coordinate (025) at (1.5,0);
\coordinate (026) at (1.5,1);
\coordinate (027) at (1.75,2);
\coordinate (028) at (2,1);
\coordinate (029) at (2,0);
\coordinate (030) at (2.5,1);
\coordinate (031) at (3,0);

\coordinate (032) at (3.25,3);

\coordinate (033) at (3.5,0);
\coordinate (034) at (4,1);
\coordinate (035) at (4.5,0);
\coordinate (036) at (4.5,1);
\coordinate (037) at (4.75,2);
\coordinate (038) at (5,1);
\coordinate (039) at (5,0);
\coordinate (040) at (5.5,1);
\coordinate (041) at (6,0);

\coordinate (042) at (0,-1);

\coordinate (51) at (-6,1);
\coordinate (52) at (-6.25,2);
\coordinate (53) at (-6.5,1);
\coordinate (54) at (-6.5,2);
\coordinate (55) at (-6.75,3);
\coordinate (56) at (-7,2);
\coordinate (57) at (-7,1);
\coordinate (58) at (-7.25,2);
\coordinate (59) at (-7.5,1);

\coordinate (61) at (6,1);
\coordinate (62) at (6.25,2);
\coordinate (63) at (6.5,1);
\coordinate (64) at (6.5,2);
\coordinate (65) at (6.75,3);
\coordinate (66) at (7,2);
\coordinate (67) at (7,1);
\coordinate (68) at (7.25,2);
\coordinate (69) at (7.5,1);

%edges
\draw (01)--(02)--(03);
\draw (04)--(05)--(06);
\draw (07)--(08)--(09);
\draw (02)--(05)--(08);

\draw (05)--(010)--(015);

\draw (011)--(012)--(013);
\draw (014)--(015)--(016);
\draw (017)--(018)--(019);
\draw (012)--(015)--(018);

\draw (020)--(021)--(022);
\draw (010)--(021)--(032);

\draw (023)--(024)--(025);
\draw (026)--(027)--(028);
\draw (029)--(030)--(031);
\draw (024)--(027)--(030);

\draw (027)--(032)--(037);

\draw (033)--(034)--(035);
\draw (036)--(037)--(038);
\draw (039)--(040)--(041);
\draw (034)--(037)--(040);

\draw (51)--(52)--(53);
\draw (54)--(55)--(56);
\draw (57)--(58)--(59);
\draw (52)--(55)--(58);

\draw (61)--(62)--(63);
\draw (64)--(65)--(66);
\draw (67)--(68)--(69);
\draw (62)--(65)--(68);

\draw (55)--(021)--(65);

%labels

\draw (01) node[blackvertex] {};
\draw (02) node[blackvertex] {};
\draw (03) node[blackvertex] {};
\draw (04) node[blackvertex] {};
\draw (05) node[blackvertex] {};
\draw (06) node[blackvertex] {};
\draw (07) node[blackvertex] {};
\draw (08) node[blackvertex] {};
\draw (09) node[blackvertex] {};
\draw (010) node[blackvertex] {};
\draw (011) node[blackvertex] {};
\draw (012) node[blackvertex] {};
\draw (013) node[blackvertex] {};
\draw (014) node[blackvertex] {};
\draw (015) node[blackvertex] {};
\draw (016) node[blackvertex] {};
\draw (017) node[blackvertex] {};
\draw (018) node[blackvertex] {};
\draw (019) node[blackvertex] {};
\draw (020) node[blackvertex] {};
\draw (021) node[blackvertex] {};
\draw (022) node[blackvertex] {};
\draw (023) node[blackvertex] {};
\draw (024) node[blackvertex] {};
\draw (025) node[blackvertex] {};
\draw (026) node[blackvertex] {};
\draw (027) node[blackvertex] {};
\draw (028) node[blackvertex] {};
\draw (029) node[blackvertex] {};
\draw (030) node[blackvertex] {};
\draw (031) node[blackvertex] {};
\draw (032) node[blackvertex] {};
\draw (033) node[blackvertex] {};
\draw (034) node[blackvertex] {};
\draw (035) node[blackvertex] {};
\draw (036) node[blackvertex] {};
\draw (037) node[blackvertex] {};
\draw (038) node[blackvertex] {};
\draw (039) node[blackvertex] {};
\draw (040) node[blackvertex] {};
\draw (041) node[blackvertex] {};

\draw (042) node[vertex] {$2_\#20$};

\draw (51) node[blackvertex] {};
\draw (52) node[blackvertex] {};
\draw (53) node[blackvertex] {};
\draw (54) node[blackvertex] {};
\draw (55) node[blackvertex] {};
\draw (56) node[blackvertex] {};
\draw (57) node[blackvertex] {};
\draw (58) node[blackvertex] {};
\draw (59) node[blackvertex] {};
\draw (61) node[blackvertex] {};
\draw (62) node[blackvertex] {};
\draw (63) node[blackvertex] {};
\draw (64) node[blackvertex] {};
\draw (65) node[blackvertex] {};
\draw (66) node[blackvertex] {};
\draw (67) node[blackvertex] {};
\draw (68) node[blackvertex] {};
\draw (69) node[blackvertex] {};

\end{tikzpicture}
%\vspace{-0.7cm}
\end{center}
\caption{Game trees of $2_\#0$ and $2_\#20$}\label{fig:ctrex}
\end{figure}

%\section{Conclusion}

\end{sloppypar}

\end{document}